\documentclass[journal]{IEEEtran}
\usepackage{amsmath,amssymb}
\usepackage{graphicx,graphics,color,psfrag}
\usepackage{cite,balance}
\usepackage{caption}
\captionsetup{font={small}}
\allowdisplaybreaks
\usepackage{algorithm}
\usepackage{algorithmic}
\usepackage{float}
\usepackage[caption=false,font=normalsize,labelfont=sf,textfont=sf]{subfig}
\usepackage{mdframed} 
\usepackage{framed} 
\usepackage{accents}
\usepackage{amsthm}
\usepackage{bm}
\usepackage{url}
\usepackage[english]{babel}
\usepackage{multirow}
\usepackage{enumerate}
\usepackage{cases}
\usepackage{stfloats}
\usepackage{dsfont}
\usepackage{color,soul}
\usepackage{amsfonts}
\usepackage{cite,graphicx,amsmath,amssymb}
\usepackage{fancyhdr}
\usepackage{hhline}
\usepackage{graphicx,graphics}
\usepackage{array,color}
\usepackage{mathtools}
\usepackage{amsmath}
\usepackage[T1]{fontenc}
\usepackage{float} 
\usepackage{diagbox}
\usepackage{tcolorbox}
\usepackage{float}
\usepackage{siunitx}
\usepackage{lipsum}
\usepackage{adjustbox}
\usepackage{stfloats}

\newtheorem{definition}{\emph{\underline{Definition}}}

\newtheorem{lemma}{\emph{\underline{Lemma}}}
\newtheorem{example}{\emph{\underline{Example}}}

\newtheorem{proposition}{\emph{\underline{Proposition}}}

\newtheorem{remark}{\bf \emph{\underline{Remark}}}

\def\l{\left}
\def\r{\right}
\def\({\left(}
\def\){\right)}

\setcounter{page}{1}



\def\b0{{\mathbf{0}}}








\newcommand{\nn}{\nonumber}

\begin{document}
\captionsetup[figure]{name={Fig.}}
\title{\huge 
Multi-beam Training for Near-field Communications\\ in High-frequency Bands} 
\author{Cong Zhou, Changsheng You,~\IEEEmembership{Member,~IEEE}, Zixuan Huang, Shuo Shi,~\IEEEmembership{Member,~IEEE}, \\ Yi Gong,~\IEEEmembership{Senior Member,~IEEE}, Chan-Byoung Chae,~\IEEEmembership{Fellow,~IEEE}, and Kaibin Huang,~\IEEEmembership{Fellow,~IEEE},
	\thanks{C. Zhou is with the Department of Electrical and Electronic Engineering, Southern University of Science and Technology, Shenzhen 518055, China and also with the School of Electronic and Information Engineering, Harbin Institute of Technology, Harbin, 150001, China (e-mail:  zhoucong@stu.hit.edu.cn). C. You and Y. Gong are with the Department of Electronic and Electrical Engineering, Southern University of Science and Technology, Shenzhen 518055, China (e-mail: youcs@sustech.edu.cn, gongy@sustech.edu.cn). Z. Huang is with the School of Electronics and Communication Engineering, Guangzhou University, Guangzhou, China (email: zxhuang@gzhu.edu.cn). S. Shi is with the School of Electronic and Information Engineering, Harbin Institute of Technology, Harbin, 150001, China (e-mail: crcss@hit.edu.cn). C. B. Chae is with the School of Integrated Technology, Yonsei University, Seoul, 03722, Korea (email: cbchae@yonsei.ac.kr). K. Huang is with Department of Electrical and Electronic Engineering, The University of Hong Kong, Hong Kong, China (email: huangkb@eee.hku.hk).}
}
\maketitle
\begin{abstract}
	In this paper, we study  efficient  \emph{multi-beam} training design for \emph{near-field} communications to reduce the beam training overhead of conventional single-beam training methods. In particular, the array-division based multi-beam training method, which is widely used in far-field communications, cannot be directly applied to the near-field scenario, since different sub-arrays may observe different user angles and there exist coverage holes in the angular domain. To address these issues, we first devise a new near-field multi-beam codebook by sparsely activating a portion of antennas to form a  \emph{sparse linear array} (SLA), hence generating multiple beams simultaneously by effective exploiting the near-field \emph{grating-lobs}. Next,  a two-stage near-field beam training method is proposed, for which several candidate user locations are  identified firstly based on multi-beam sweeping over time, followed by the second stage to further determine the true user location with a small number of single-beam sweeping.  Finally, numerical results show that our proposed multi-beam training method significantly reduces the beam training overhead of conventional single-beam training methods, yet achieving comparable  rate performance in data transmission.
\end{abstract}
\begin{IEEEkeywords}
	Extremely large-scale array, near-field communications, beam training, sparse array.
\end{IEEEkeywords}

\section{Introduction}
\emph{Extremely large-scale array} (XL-array) has emerging as a promising technology to significantly enhance the spectral efficiency and spatial resolution of future wireless networks, via increasing the number of antennas by another order-of-magnitude \cite{you2024next}. In addition, the substantially increased array aperture also brings a fundamental change to the radio propagation modelling, shifting from the conventional far-field communications with planar wavefronts to the  \emph{near-field} counterpart with spherical wavefronts (SWs) \cite{lu2023tutorial}. This SW channel characteristic not only provides appealing opportunities such as near-field beam-focusing and joint resource management in the angle-and-range domains \cite{zhang20236g}, but also imposes new challenges in e.g., the beamforming design and channel estimation \cite{Cui2022channel}. In this paper, we consider a multi-user XL-array communication systems and study efficient near-field multi-beam training design.
\subsection{Prior Work}
For near-field communications, channel state information (CSI) acquisition is indispensable yet practically challenging, due to the larger number of antennas and the new SW channel model. The existing work on this area can be largely classified into two categories, namely,  explicit channel estimation and inexplicit beam training.
\subsubsection{Near-field channel estimation}
For high-frequency band systems such as millimeter-wave (mmWave) and terahertz (THz), compressed sensing (CS) techniques have been widely used to estimate the sparse channels in the far-field multi-input multi-out (MIMO) communication systems. These techniques, however, may not be directly applied to the near-field scenario, since the SW model does not exhibit spatial sparsity in the angular domain.
To address this issue, the authors in \cite{Cui2022channel} proposed a \emph{polar}-domain codebook to effectively represent the near-field channels such that the SW channels admit spatial sparsity in the polar domain. As such, the general multi-path near-field channels in narrow-band systems can be estimated by using customized CS techniques such as orthogonal matching
pursuit (OMP) \cite{Cui2022channel}.
Further, the authors in \cite{zhang2023near} proposed an angular-domain dictionary with a range parameter and iteratively obtained the angular-domain sparse vector and range information, hence significantly reducing the codebook size in \cite{Cui2022channel}.
Moreover, for mixed line-of-sight (LoS)/non-LoS (NLoS) near-field channels, a customized two-stage near-field channel estimation method was proposed in \cite{lu2023near}, which separately estimates the LoS and NLoS channel components by using OMP algorithms.
In addition, the near-field channel estimation methods in narrow-band systems were further extended to the wide-band scenario, where a bilinear pattern detection method was exploited in \cite{cui2023nearwideband}. However, the above works have mostly considered the near-field environmental scatterers only, while far-field scatterers may also exist in the general case. For this scenario, the authors in \cite{wei2021channel} developed an efficient hybrid near- and far-field channel estimation method, for which the far-field and near-field path components were respectively estimated by using OMP techniques. In addition to the SWs, spatial non-stationarity is another main channel feature for near-field XL-array communications, where only a portion of antennas are visible to the users due to environmental scatterers and blockages. 
Specifically, the authors in \cite{chen2023non} proposed to exploit spatial non-stationary effect to reduce pilot overheads and converted the non-stationary channel to a series of spatial stationary channels, which were estimated by CS based techniques.
\subsubsection{Near-field beam training}
Alternatively, beam training is another efficient method for acquiring CSI, which aims to select the best beamforming vector from a predefined beam codebook \cite{zhang2023nearYou, zhang2022dual}. This method has been widely used in practice and documented in IEEE 802.15.3c and IEEE 802.11ad protocol \cite{xiao2016hierarchical, hassanieh2018fast}, since it usually requires power measurements and comparisons only, thus avoiding demanding computational complexity in channel estimation methods.
Moreover, beam training methods are effective in both the low and high signal-to-noise ratio (SNR) regimes, while several channel estimation methods (e.g., least square (LS)) may suffer degraded performance in the low-SNR regime \cite{zheng2022survey}.
However, the existing beam training methods for far-field communications cannot be directly applied to the near-field scenario due to the channel model mismatch.
To tackle this issue, the authors in \cite{cui2022near} proposed to leverage the polar-domain codebook for  the single-beam training based on the exhaustive search, which, however, incurs prohibitively high beam training overhead that scales with the product of numbers of sampled angles and ranges. 
To reduce the beam training time, a variety of more efficient near-field beam training methods have been recently proposed in the literature \cite{Zhang2022fast, wu2024near,wu2023two}.
For example, the authors in \cite{Zhang2022fast} developed a novel two-phase beam training method that first estimates the user angle using the discrete Fourier transform (DFT) codebook and then resolves the user range using the polar-domain codebook.
This method was further improved in \cite{wu2024near} where both the user angle and range are estimated jointly based on the DFT codebook.
Moreover, hierarchical near-field beam training methods were proposed in \cite{lu2023hierarchical, wu2023two, chen2023hierarchical} to achieve lower beam training overhead that essentially finds the coarse user angle (and/or range) firstly by using wide beams and then gradually resolves fine-grained user angle-and-range with narrow beams.

Compared with the single-beam training, multi-beam training design is capable of generating multiple beams simultaneously, hence effectively reducing the beam training overhead. Specifically, for far-field communication systems, the array-division method has been widely used for designing multi-beam codebooks, where each sub-array is responsible for beam sweeping in an angular subspace only, under the assumption that all sub-arrays share the same user angle. 
Then, the true user angle can be determined by comparing the received powers at the users over time based on different methods such as the cross validation \cite{you2022fast} and random hashing (RH) \cite{hassanieh2018fast}. However, these multi-beam training methods may not be applicable to the near-field communication systems, since 1) users generally locate at different regions of different sub-arrays, 2) different sub-arrays may observe different user angles, and 3) there exist uncovered areas in the polar domain (which will be detailed in Section \ref{Sec:Far-field}). Another multi-beam training design leverages the near-field beam-split effect in wide-band communications by using time-delay (TD) beamforming to control the beams at different frequencies, focusing them on different locations \cite{cui2023rainbow}.
However, this method relies on wide-band systems only.
These thus motivate the current work to design a new and efficient multi-beam training method tailored to near-field narrow-band communication systems.
\subsection{Contributions and Organizations}
In this paper, we consider a multi-user near-field communication systems as shown in Fig. \ref{fig:systemModel}, where an XL-array base station (BS) servers multiple single-antenna users under the LoS-dominant channel model. To reduce the design complexity of hybrid beamforming, we first design the analog beamforming using the beam training method and then devise the digital beamforming based on estimated effective channels. Different from existing works that mostly considered near-field single-beam training, we aim to design a new multi-beam codebook and an efficient beam training method to reduce the beam training overhead.
The main contributions are summarized as follows.
\begin{itemize}
	\item First, we design a new near-field multi-beam codebook by sparsely activating a portion of antennas to form an \emph{SLA}. To this end, we first characterize the near-field beam pattern for the effective SLA and show that each beam codeword steering towards a specific location can essentially generate multiple beams simultaneously, among which one is the main-lobe and the others are grating-lobs. As such, only a subspace in the polar domain (corresponding to the main-lobe) is sampled for establishing the near-field multi-beam codebook.
	\item Second, we propose a novel two-stage multi-beam training method, for which several candidate user locations are  identified firstly based on the multi-beam sweeping over time, followed by the second stage to further determine the true user location with a small number of single-beam sweeping.
	Moreover, we show that this method is applicable to the far-field scenario as well as the general multi-path channel setup.
	\item Finally, we provide extensive numerical results to demonstrate the effectiveness of our proposed near-field multi-beam training method and shed useful insights.
	It is shown that the proposed multi-beam training method can achieve close rate performance to the exhaustive-search method under various simulation setups, yet requiring much lower training overhead.
	Moreover, by controlling the antenna activation sparsity, there generally exists a trade-off between reducing the beam training overhead and increasing the beam training accuracy.  
	
\end{itemize}

The remainder of this paper is organized as follows.
In Section \ref{Sec:System model and problem formulation}, we present the system model and problem formulation.
In Section \ref{Sec:Far-field}, we discuss the main issues when the existing array-division method is used for near-field multi-beam training. Then, a new array-sparse-activation method is proposed in Section \ref{Sec:Near-field multi-beam codebook design} for designing a customized near-field multi-beam codebook and  
 a two-stage near-field multi-beam training method was proposed in Section \ref{Sec:scheme1}.
Finally, numerical results are presented in section \ref{Sec:numericalResults} to demonstrate the effectiveness of the proposed method, followed by the conclusions made in Section \ref{Sec:Conclusion}. 

\textit{Notations}: Lower-case and upper-case boldface letters  denote vectors and matrices, respectively. Moreover, calligraphic letter is used to represent set. For vectors and matrices, the symbol $ (\cdot)^{H} $ denotes the conjugate transpose operation. The notations $ \left|\cdot\right| $ and $ \left\lVert \cdot\right\lVert $ represent the absolute value of a numerical entity and the $ \ell_{2} $ norm, respectively.
The symbol $\mathbf{I}_K $ denotes a $ K $-dimensional identity matrix.

\section{System Model and Problem Formulation}
\label{Sec:System model and problem formulation}
We consider a multi-user XL-array downlink communication system as shown in Fig.~\ref{fig:systemModel}, where a BS equipped with an $ N $-antenna uniform linear array (ULA) serves $ K $ single-antenna users. 
For ease of exposition, we assume that $ N $ is an odd number.

\subsection{System Model}
\underline{\textbf{Near-field channel model:}} We assume that the XL-array is located at the $y$-axis and centered at the origin. Specifically, each antenna of the XL-array is located at ($0, nd_{0}$), where $ n \in \mathcal{N} \triangleq \{0,\pm 1,\cdots, \pm \frac{N-1}{2}\} $ denotes the antenna index and $ d_{0}$ denotes the inter-element spacing.
For the dense ULA, we have $ d_{0} = \frac{\lambda}{2}$, where $ \lambda $ represents the carrier wavelength.
Moreover, all users are assumed to be located in the Fresnel near-field region of the XL-array, for which the BS-user range $ r_{k} $ is within $Z_{\rm F}\le r_{k} \le  Z_{\rm R}$, where $ Z_{\rm F}\triangleq\max{\{ d_{\rm R} , 1.2D \} }   $ and $ Z_{\rm R} \triangleq\frac{2D^2}{\lambda}$ denote the \emph{Fresnel distance} and the \emph{Rayleigh distance}, respectively \cite{selvan2017fraunhofer} with $ D = (N-1)d_{0} $ representing the array aperture.
Moreover, $ d_{\rm R} $ represents the boundary between the reactive and radiating near-field regions, which is verified to be several wavelengths \cite{ouyang2024impact}.
Therefore, the Fresnel distance can be simplified by $ Z_{\rm F} = 1.2D $.
For example, when $N=257$ and $f = 30$ GHz, the Rayleigh and Fresnel distances are $328$ m and $1.536$ m, respectively. 
This indicates that in cellular networks, the users are very likely to be located in the radiative near-field region.
It is worth noting that in the Fresnel region, given $r_k >1.2D $, the channel amplitude variations over the antennas are negligible, while the phase variations are non-linear. Based on the above, the general  multi-path channel from the XL-array to user $ k $ can be modeled based on  \emph{uniform} SWs (USWs) \cite{zhang2023joint, zhang2023mixedYou}
\begin{equation}
	\label{Eq:nf-channel}
	\mathbf{h}^H_{k} = \sqrt{N}\beta_{k} \mathbf{b}^{H}(r_{k}, \theta_{k}) + \sum_{\ell=1}^{L_k} \sqrt{\frac{N}{L_k}}\beta_{k,\ell} \mathbf{b}^{H}(\bar{r}_{k,\ell}, \bar{\theta}_{k,\ell}),
\end{equation}
which includes one LoS path and $L_k$ non-LoS (NLoS) paths.
Herein, the parameters $ \beta_{k} $ (or $ \beta_{k,\ell} $), $ {r}_{k} $ (or $ \bar{r}_{k,\ell} $) and $ \theta_{k} $ (or $ \bar{\theta}_{k, \ell} $) respectively denote the complex LoS (or NLoS) path gain, and the spatial angle and range of the signal path. 
Specifically, the complex gain of the LoS path can be modeled as $ \beta_{{k}}=\sqrt{\frac{\kappa_k}{\kappa_k+1}} \frac{\sqrt{\alpha_0}}{r_{{k}}} e^{-\frac{\jmath 2 \pi r_{{k}}}{\lambda}} $, where $ {\kappa_k} $ and  $\alpha_0$  represent the Rician factor and the reference channel power gain at a range of $1$ m, respectively.
Moreover, the complex gain of the $ \ell $-th NLoS path is assumed to follow the Gaussian distribution, i.e., $\beta_{k,\ell} \sim  \mathcal{C N}(0, \sigma_{k,\ell}^2) $, where $ \sigma_{k,\ell}=\sqrt{\frac{1}{\kappa_k+1}} \frac{\sqrt{\alpha_0}}{r_{{k}}}$ \cite{10123941,LuCommunicating2022}.
In this paper, we mainly consider the near-field communication scenarios in high-frequency bands such as mmWave and even THz. 
In these scenarios, the NLoS channel paths exhibit negligible power due to the severe path-loss and shadowing, while the case with comparable multi-path components will be discussed in Section \ref{Sec:scheme1} and the numerical results under the general Rician fading channel model will be presented in Section \ref{Sec:numericalResults}.\footnote{For near-field beam training, the proposed method aims to design the phase shifts for analog beamforming, which is still applicable to the case of non-uniform SW model with non-negligible amplitude variations over antennas, albeit with different power gains.}  
In the following, we mainly consider the LoS channel path for which the BS-user channel can be approximated as $ \mathbf{h}^H_{k}\approx \sqrt{N}\beta_{k} \mathbf{b}^{H}(r_{k}, \theta_{k}),\forall k \in \mathcal{K} \triangleq \{1,2,\cdots,K\} $.
Based on the USW model, the normalized near-field channel steering vector, $\mathbf{b}^{H}(r_{k},\theta_{k}) \in \mathbb{C}^{1 \times N}$, can be modeled as \cite{Zhang2022fast}

\begin{figure}[t]
	\includegraphics[width=9cm]{./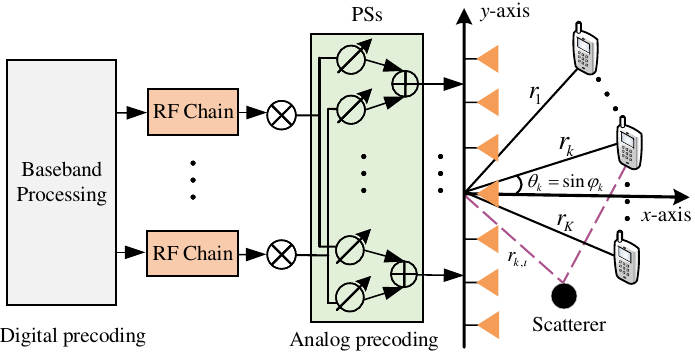}
	\centering
	\caption{System model with hybrid beamforming architecture.}
	\label{fig:systemModel}
	\vspace{-10pt}
\end{figure}
\begin{equation}\label{near_steering}
	\left[\mathbf{b}^H\left(r_{k},\theta_{k}\right) \right]_{n} = \frac{1}{\sqrt{N}}e^{-\frac{\jmath 2 \pi (r_{k,n}-r_k)}{\lambda}}, \forall n\in \mathcal{N}, \forall k \in \mathcal{K}, 
\end{equation}
where $r_{k,n}=\sqrt{r_{k}^2+ {n}^2d_0^2-2r_{k}\theta_{k} nd_0}$ denotes the range between the $ n $-th antenna and the $ k $-th user. 
Moreover, $\theta_{k} \triangleq \cos \alpha_k \in[-1,1]$ represents the spatial angle at the BS with $ \alpha_k $ denoting the physical angle-of-departure (AoD) from the BS center to user $ k $.
By means of Fresnel approximation, $ r_{k,n} $ can be approximated as \cite{Cui2022channel}
\begin{equation}
	\label{eq:fresnelApproximation}
	r_{k,n} \approx r_{k}-nd_0 \theta_{k}+\frac{n^2 d_0^2 (1- \theta_{k}^2)}{2 r_{k}},
\end{equation}
which is accurate when the phase error caused by Fresnel approximation does not exceed $ \pi/8 $~\cite{liu2023near}.

\underline{\textbf{Signal model:}} 
We consider the cost-effective hybrid beamforming architecture for the XL-array, which comprises of $ N_{\rm RF} $ radio frequency (RF) chains with $ K \le N_{\rm RF} \le N $.
To minimize the energy consumption, we consider a practical setup with $ N_{\rm RF} = K $ \cite{sun2019beam}.
Let $\mathbf{x}\in \mathbb{C}^{K \times 1}$ denote the transmitted signals for the $ K $ users with $ \mathbb{E}\left[\mathbf{x}\mathbf{x}^{{H}}\right]=\frac{P_{\rm {tol }}}{K} \mathbf{I}_K $, where $ P_{\rm tol} $ represents the total transmit power of the BS.
Moreover, $\mathbf{F}_{\rm RF}\in \mathbb{C}^{N \times K}$ and $\mathbf{F}_{\rm BB}\in \mathbb{C}^{K \times K}$ respectively denote the analog and digital beamforming matrices.
Based on the above, the received signal at user $ k $ is given by
\begin{align}\label{Eq:general_Sig}
	y_{k} \approx \tilde{y}_k\triangleq
	\sqrt{N}\beta_{k}\mathbf{b}^{H}(r_{k},\theta_{k})\mathbf{F}_{\rm RF}\mathbf{F}_{\rm BB}\mathbf{x}+z_k,
\end{align}
where $z_k \sim  \mathcal{C N}\left(0, \sigma^2\right)$ is the received additive white Gaussian noise (AWGN) at the $ k $-th user and $ \sigma^2 $ represents the noise power. 
Specifically, the entries of $ \mathbf{F}_{\mathrm{RF}} $ satisfy $ |[\mathbf{F}_{\mathrm{RF}}]_{k_1,k_2}| = 1, \forall k_1, k_2 \in \mathcal{K} $ and the (normalized) power constraint is given by $ \left\|\mathbf{F}_\mathrm{RF}\mathbf{f}_{\mathrm{BB},k} \right \|^{2}_{F}  = 1 $ \cite{liu2024near}.
As such, the achievable rate of the $ k $-th user in bits/second/hertz (bps/Hz) is obtained as
\begin{equation}\label{eq11}
R_k=\log_2\left(1+\frac{\frac{P_{\rm {tol }}}{K}\left|\mathbf{h}_k^{H} \mathbf{F}_{\mathrm{RF}} \mathbf{f}_{\mathrm{BB}, k}\right|^2}{\frac{P_{\rm {tol }}}{K}\sum_{ i \neq k}\left|\mathbf{h}_k^{H} \mathbf{F}_{\mathrm{RF}} \mathbf{f}_{\mathrm{BB}, i}\right|^2+\sigma^2}\right),
\end{equation}
where $ \mathbf{f}_{\mathrm{BB}, k} $ represents the $ k $-th column of matrix $ \mathbf{F}_{\mathrm{BB}} $.
\subsection{Problem Formulation}
We aim to maximize the system sum-rate $ R_{\rm sum} = \sum_{k=1}^{K}R_k $. 
Specifically, for the analog beamforming, we consider the practically used \emph{codebook-based} design \cite{he2017codebook}, for which $ \mathbf{f}_{\mathrm{RF}, k} $ for user $ k $ is chosen from a pre-defined codebook, denoted by $ \mathcal{W} $.
Based on the above, the sum-rate maximization problem can be formulated as
\begin{equation}
	\begin{aligned}
		({\bf P1}):&\!\!\!\max\limits_{\mathbf{F}_{\mathrm{RF}}, \mathbf{F}_{\mathrm{BB}}}\! \sum\limits_{k = 1}^{K}\! \log_2\!\left(\!1\!\!+\!\!\frac{\frac{P_{\rm {tol }}}{K}\left|\mathbf{h}_k^{H} \mathbf{F}_{\mathrm{RF}} \mathbf{f}_{\mathrm{BB}, k}\right|^2}{\frac{P_{\rm {tol }}}{K}\!\!\sum_{ i \neq k}\!\!\left|\mathbf{h}_k^{H} \mathbf{F}_{\mathrm{RF}} \mathbf{f}_{\mathrm{BB}, i}\right|^2\!\!\!\!+\!\!\sigma^2}\right) \\
		&~~~{\text{s}}{\text{.t}}{\rm{. }} ~~~ \mathbf{f}_{\mathrm{RF},k} \in \mathcal{W},\\
		&~~~\quad\quad\left \| \mathbf{F}_\mathrm{RF}\mathbf{f}_{\mathrm{BB},k} \right \|^{2}_{F}  = 1, \forall k \in \mathcal{K}.
	\end{aligned}
\end{equation}
Due to the coupling between the analog beamformer $ \mathbf{F}_{\mathrm{RF}} $ and digital beamformer $ \mathbf{F}_{\mathrm{BB}} $, it is generally difficult to obtain an optimal solution to Problem $ ({\bf P1}) $.
To tackle this difficulty, we consider a practical and low-complexity hybrid beamforming method \cite{liu2023near}, where the analog beamformer $ \mathbf{F}_{\mathrm{RF}} $ is designed to steer  $ K $ beams towards the targeted $ K $ users for maximizing the received signal power at each individual user.
As such, the effective channel accounting for the analog beamforming can be expressed as $ \mathbf{g}_k^{H} = \mathbf{h}_k^{H}\mathbf{F}_{\mathrm{RF}}, \forall k\in \mathcal{K} $.
Then, the digital beamforming can be  devised to deal with the residual inter-user interference by using e.g., the zero-forcing (ZF) or  minimum mean-squared error (MMSE) beamformer.
For example,  the normalized ZF beamforming vector, denoted by $ {\mathbf{f}}_{\mathrm{BB},k}^{\rm ZF} $, is given by $ {\mathbf{f}}_{\mathrm{BB},k}^{\rm ZF} =  \bar{\mathbf{f}}_{\mathrm{BB},k}^{\rm ZF} /\| \bar{\mathbf{f}}_{\mathrm{BB},k}^{\rm ZF}  \| $, where
\begin{equation}
	\bar{\mathbf{f}}_{\mathrm{BB},k}^{\rm ZF} = \left(\mathbf{I}_K-\bar{\mathbf{A}}_k\left(\bar{\mathbf{A}}_k^H \bar{\mathbf{A}}_k\right)^{-1} \bar{\mathbf{A}}_k^H\right) \mathbf{g}_k, \forall k \in \mathcal{K},
\end{equation}
with $ \bar{\mathbf{A}}_k=\left[\mathbf{g}_1, \cdots, \mathbf{g}_{k-1}, \mathbf{g}_{k+1}, \cdots, \mathbf{g}_K\right] $.
Similarly, for the MMSE beamforming, the digital beamforming vector $ \bar{\mathbf{f}}_{\mathrm{BB},k}^{\rm MMSE} $ can be obtained as ${\mathbf{f}}_{\mathrm{BB},k}^{\rm MMSE} = \mathbf{B}_k^{-1} \mathbf{g}_k/\left\|\mathbf{B}_k^{-1} \mathbf{g}_k\right\|, \forall k \in \mathcal{K}$
where $ \mathbf{B}_k=\sum_{i \neq k}^K \frac{P_{\rm tol}}{K\sigma^2} \mathbf{g}_i \mathbf{g}_i^H+\mathbf{I}_K $.

Therefore, the analog beamforming optimization problem is equivalent to selecting $ K $ beam codewords to maximize the beam gain of each user. The corresponding optimization problem can be formulated as
\begin{equation}
	\begin{aligned}
		({\bf P2}):~~~~ & \max\limits_{\mathbf{f}_{\mathrm{RF},k}} \left| \mathbf{b}^H(r_k,\theta_k){\mathbf{f}_{\mathrm{RF},k}} \right| \\
		&~~{\text{s}}{\text{.t}}{\rm{. }} ~ \mathbf{f}_{\mathrm{RF},k} \in \mathcal{W},
	\end{aligned}
\end{equation}
for each user $k$. Problem $ ({\bf P2}) $ is essentially a \emph{near-field beam-training} problem.
Although there are several single-beam training methods proposed in the existing literature (e.g., \cite{Cui2022channel,Zhang2022fast}),
they usually require a large number of training symbols since only one beam is generated at each time.
For instance, consider the case where the BS has 1025 antennas and there are 5 range samples in the  polar-domain codebook \cite{Cui2022channel}. The overhead of the exhaustive-search method \cite{Cui2022channel} and the two-phase beam training method \cite{Zhang2022fast} is up to 5125 and 1030, respectively, which is unaffordable in practice.
To address this issue, in the sequel, we first point out the main issues when applying the existing multi-beam training methods in far-field communication scenarios to the new near-field counterpart.
Then,  an efficient \emph{array-sparse-deactivation} based method is proposed to design fast near-field multi-beam training.

\section{Far-field  Multi-beam Training Methods}\label{Sec:Far-field}
In this section, we first introduce the existing array-division based multi-beam training method. Then, we discuss its main limitations and challenges when applied to the near-field scenario.

\begin{figure}[!t]
	\centering
	\captionsetup[subfloat]{labelfont=rm, format=plain,labelformat=empty}	
	\subfloat[{\small {\rm (a) Issues 1 and 2}.}]{\includegraphics[width=8.5cm]{./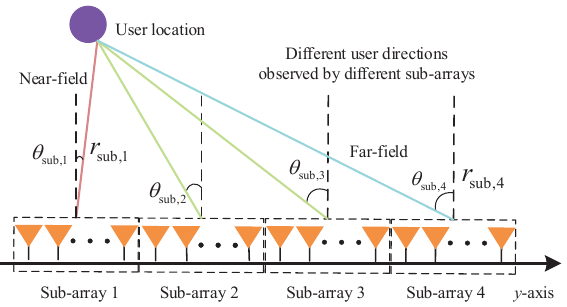} 
	\label{fig:SubArray}}
	\hfil
	\subfloat[{\small {\rm (b) Issue 3: Coverage holes.}}] {\includegraphics[width=8.5cm]{./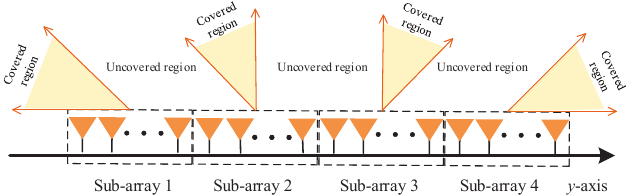} \label{fig:SubArrayHole}}
	\caption{{Issues of the existing array-division method when applied to the near-field beam training.}}
	\label{fig:issue}
	\vspace{-0.5cm} 
\end{figure}
\subsection{Array-division based Multi-beam Training}
For far-field communications, a widely used method to generate multiple beams concurrently is dividing the whole array into $ \tilde{M} $ sub-arrays, each steering its beam towards one specific angle with $ \rho  =  {N}/{\tilde{M}} $ antennas.
Mathematically, let $ \theta_i, i\in\mathcal{I} \triangleq \{1,2,\cdots,\tilde{M}\}$ denote the targeted beam angles.
The multi-beam codeword $ \tilde{\mathbf{w}}$ can be designed as \cite{you2022fast}
\begin{equation}
	[\tilde{\mathbf{w}}]_{(i-1)\rho +1:i\rho } = e^{\jmath\pi(i-1)\rho \theta_i}e^{-\jmath\pi(0:\rho -1)\theta_i},\forall i \in \mathcal{I}.
\end{equation}
Then, the multi-beam codebook for beam training can be generated based on different methods such as RH \cite{hassanieh2018fast} and equal interval multi-arm beam (EIMB) \cite{you2022fast}.

\subsection{Key Issues When Applied to Near-field Scenarios}
Note that the array-division based multi-beam training method generally requires three assumptions:

\textbf{Assumption 1:} The served user is located in the far-field region of each sub-array.

\textbf{Assumption 2:} There is a common angle from the user to all sub-arrays.

\textbf{Assumption 3:} The angle sweeping of all sub-arrays can cover the entire angular region. 

The first assumption is to enforce the effectiveness of the user angle search, while the second assumption is to ensure that the user angle can be jointly determined by the collective beam determination over all sub-arrays. 
Moreover, the third assumption is to ensure that the far-field multi-beam codebook can cover the user angle.
However, these assumptions may \emph{not} practically hold in the new near-field scenario, hence resulting in considerable beam training accuracy loss if directly applying this method. The detailed reasons are elaborated below.
\begin{itemize}
	\item[1.] \textbf{Users generally locate at different regions of different sub-arrays:} 
	Without the user location information, it is generally difficult to partition the XL-array such that the served user is guaranteed to fall into the far-field region of all sub-arrays, thus making Assumption 1 invalid. For example, as illustrated in Fig. \ref{fig:issue}(a), when the user is located near sub-array 1, while very far away from sub-array 4, we have $ r_{{\rm sub},1} < r_{{\rm sub},4}$ and $ \theta_{{\rm sub},1} \ll \theta_{{\rm sub},4}$. This indicates that the user is more likely to be situated in far-field of sub-array 4 while the near-field of sub-array 1 \cite{wu2023enabling}.
	To address this issue, a straightforward approach is by further reducing the number of antennas in each sub-array to ensure that users are more likely located in the far-field region of each sub-array. However, this will render broader beams and significantly reduced sub-array gains, resulting in lower spatial resolution and accuracy \cite{xiao2016hierarchical}.
	For instance, when $ {\tilde M} = 8 $, the success identification rate of the multi-beam training method based on sub-arrays in the far-field is less than 50\%, even in the high-SNR regime \cite{you2022fast}.
	\item[2.] \textbf{Different sub-arrays generally observe different user angles:}
	Even if the user is in the far-field of each sub-array, different sub-arrays may have different angles with the user due to the large XL-array aperture size, especially for sub-arrays at the two ends. 
	Therefore, Assumption 2 may fail in the near-field for the array-division method.
	In Fig. \ref{fig:issue}(a), we present different angles observed by four sub-arrays.
	It is observed that the angles from the user to each sub-array center are generally not the same, since $ \theta_{{\rm sub},1} \ll \theta_{{\rm sub},2} < \theta_{{\rm sub},3} < \theta_{{\rm sub},4}$.
	This thus renders the collective beam determination over all sub-arrays inapplicable.
	\item[3.] \textbf{There exist coverage holes in the angular domain:}
	When the whole array is divided into $ \tilde{M} $ sub-arrays, each sub-array is responsible for sweeping $ {1}/{\tilde{M}} $ of the spatial domain.
	This method can jointly cover the entire angular domain in the far-field case.
	However, in the near-field, the array-division method inevitably causes numerous coverage holes in the angular domain, as illustrated in Fig. \ref{fig:issue}(b).
	It is observed that the angular coverage of the four sub-arrays only occupy part of the angular domain, while there exists a large portion of uncovered areas distributed in the angular domain. 
	This indicates that when the users are located at the uncovered holes, the array-division method will fail to estimate the user locations, hence making  Assumption 3 inapplicable.
\end{itemize}

To address the above issues, we propose a new array-sparse-activation method in the next section to construct multiple beams for near-field communications.

\section{Near-field Multi-beam Codebook Design}
\label{Sec:Near-field multi-beam codebook design}
In this section, we first study the beam pattern of a new array-sparse-activation method.
Then, we show an interesting result that the proposed method can be used for designing the  multi-beam codebook for near-field communication systems.

To this end, we first give the main definitions as follows, including the near-field beam pattern, beam-width and beam-depth \cite{zhou2024sparse}.

\begin{definition}[Beam pattern]\label{De:beam_gain}
	\emph{For a polar-domain codeword $ \mathbf{w} = \mathbf{b}(r_{0}, \theta_{0}) $, the beam pattern characterizes the (normalized) beam power at arbitrary user locations $ (r, \theta) $. Mathematically, the near-field beam pattern of the codeword $ \mathbf{w} $ is 
		\begin{equation}
			f( r, \theta;\mathbf{w} )\triangleq|\mathbf{b}^H(r,\theta) \mathbf{w}|, \forall r, \theta.
		\end{equation} }
\end{definition}

\begin{definition}[Beam-width]\label{Def:width}
	\rm For a near-field codeword $ \mathbf{w} = \mathbf{b}(r_{0}, \theta_{0}) $, the null-to-null beam-width is the spatial angular width in the \emph{user-ring} (which is defined as $ \frac{1- \theta^2}{r} = \frac{1-\theta_0^2}{r_0} $ \cite{zhou2024sparse}), from which the magnitude of  beam pattern decreases to zero (respectively denoted as $ \theta_{\rm right} $ and $ \theta_{\rm left} $) away from the highest power, i.e.,  
	\begin{equation}
		{\rm BW}_{\mathbf{w}} \triangleq \left| \theta_{\rm right} - \theta_{\rm left} \right|.
	\end{equation}
\end{definition}

\begin{definition}[Beam-depth]\!\!\label{Def:Depth}
	\rm For a near-field codeword $ {\mathbf{w}} $ with beam angle $ \theta $, the 3-dB beam-depth characterizes the range interval $r\in [r_{\rm small}, r_{\rm large}]$ for which it satisfies \cite{9723331} $$ f^2({r},{\theta}; \mathbf{w})\le  0.5 \max\limits_{r \in  (Z_{\rm F},Z_{\rm R})}\{f^2({r},{\theta};\mathbf{w})\}.$$  Mathematically, the beam-depth is given by
	\begin{equation}
		{\rm BD}_{\mathbf{w}}  \triangleq \left| r_{\rm large} - r_{\rm small} \right|.
	\end{equation}
\end{definition}

\subsection{Proposed Array-sparse-activation Method}
In contrast to exiting multi-beam codebooks based on sub-arrays, we aim to design a new near-field codebook based on array sparse activation.
Specifically, as shown in Fig. \ref{fig:Sparse activation difference}, for the XL-array with $ N $ antennas, we only \emph{activate} $ Q $ antennas uniformly with ($ M-1 $) deactivated antennas in between.
To this end, the XL-array becomes an effective \emph{SLA}.
For the XL-array beamforming, this method is equivalent to uniformly sampling the beamforming vector $ \mathbf{w} $ with an interval of $ M $, while the positions unsampled are padded with zeros.
As such, the XL-array beamforming vector based on array sparse activation, denoted by $ \mathbf{v} \in \mathbb{C}^{N \times 1}$, can be expressed as
\begin{equation}
	\label{Eq:samped vector}
	\mathbf{v}^{T} = \left[[\mathbf{w}]_{-\frac{N-1}{2}},\underbrace{0,\cdots,0,}_{M-1}[\mathbf{w}]_{M-\frac{N-1}{2}},\cdots,[\mathbf{w}]_{\frac{N-1}{2}}\right].
\end{equation}
It can be verified that the number of non-zero elements in $ \mathbf{v} $ is $ Q =  \frac{N-1}{M}+1 $ (assuming $ Q $ being an integer for convenience).
As such, the beam pattern of $ \mathbf{v} $ can be equivalently expressed as $ f_{\rm SLA}( r, \theta;{\bar{\mathbf{v}}},M)=\left| \mathbf{b}^H_{\rm SLA}(r,\theta)\bar{\mathbf{v}} \right| $,
where  
\begin{equation}
	\bar{\mathbf{v}}^{T} \!\triangleq\! \left[ [\mathbf{w}]_{-\frac{N-1}{2}}\!,\![\mathbf{w}]_{M-\frac{N-1}{2}}\!,\![\mathbf{w}]_{2M-\frac{N-1}{2}}\!,\!\cdots\!,[\mathbf{w}]_{\frac{N-1}{2}}\right].
\end{equation}
\begin{figure}
	\centering
	\includegraphics[width=8cm]{./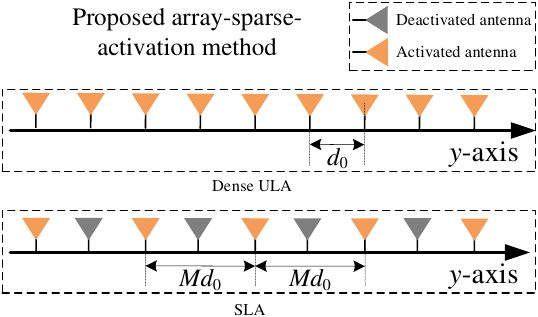}
	\caption{Illustration of proposed array-sparse-activation method.}
	\label{fig:Sparse activation difference}
	\vspace{-8pt}
\end{figure}Moreover, $ \mathbf{b}_{\rm SLA}^{H}\left(r,\theta\right) $ denotes the channel response vector of the effective SLA, which is given by
\begin{equation}
	\left[\mathbf{b}^H_{\rm SLA}\left(r,\theta\right) \right]_{q} = \frac{1}{\sqrt{Q}}e^{-\frac{\jmath 2 \pi r_{q}}{\lambda}}, \forall q\in \mathcal{Q}.
\end{equation}
Herein, $ \mathcal{Q} \triangleq \{ 0, \pm 1, \cdots, \pm \frac{Q-1}{2} \} $ denotes the set of the SLA antenna index and $ r_{q} = \sqrt{r^{2} -2 q  M d_0 r \theta + (q M d_0)^2} $ denotes the range from user $k$ to antenna $ q $ of the activated SLA.
Similarly to \eqref{eq:fresnelApproximation}, $ r_{q} $ can be approximated as $ r_{q} \approx r-qMd_0 \theta+\frac{q^2 (Md_0)^2 (1- \theta^2)}{2 r} $ under the Fresnel approximation.

\subsection{Beam Pattern of Proposed Array-sparse-activation Method}
\label{Sec:beampattern}
For the proposed array-sparse-activation method, its beam pattern is characterized as follows.
\begin{lemma}
	\emph{For the array-sparse-activation based beamforming vector $ \bar{\mathbf{v}} $ sampled from $ \mathbf{w} = \mathbf{b}(r_{0}, \theta_{0}) $, its beam pattern is given by 
		\begin{align}
			\label{LSAsum}
			&f_{\rm SLA} \left( r, \theta; \bar{\mathbf{v}},M \right) =|\mathbf{b}_{\rm SLA}^H(r,\theta) \mathbf{b}_{\rm SLA}(r_{0},\theta_{0})| \nn\\
			\!\!\! 	\overset{(a_1)}{\approx}& \frac{1}{Q} \left| \sum_{q\in \mathcal{Q}}\!\exp{\l(\jmath  \underbrace{{\pi} q  M \Delta_{\rm NF}}_{B_1}+\jmath \underbrace{\frac{\pi}{\lambda} q^2 ( M d_{0})^2\Phi }_{B_2 }\r)} \right|\\
			\triangleq & \hat{f}({r},{\theta};\bar{\mathbf{v}},M )\nn, ~\forall r, \theta,
			\!
		\end{align}
	\noindent where $ \Delta_{\rm NF}\triangleq \theta \!-\!\theta_0$ is defined as the \emph{spatial angle difference}, $ \Phi \triangleq\!  \frac{{1-\theta_{0}^2 }}{{r_{0}}} \!-\! {\frac{1-\theta^2 }{r}}$ is named as the \emph{ring difference} \cite{zhou2024sparse}, and $(a_1)$ is due to the Fresnel approximation, which has been shown to be accurate in \cite{kosasih2023finite}. 
}
\end{lemma}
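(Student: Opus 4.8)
The plan is to directly compute the inner product $|\mathbf{b}_{\rm SLA}^H(r,\theta)\mathbf{b}_{\rm SLA}(r_0,\theta_0)|$ from the definition of the SLA steering vector and then substitute the Fresnel approximation for the per-antenna ranges. First I would write out $\mathbf{b}_{\rm SLA}^H(r,\theta)\mathbf{b}_{\rm SLA}(r_0,\theta_0) = \frac{1}{Q}\sum_{q\in\mathcal{Q}} e^{-\jmath\frac{2\pi}{\lambda}r_q}e^{\jmath\frac{2\pi}{\lambda}r_{0,q}}$, where $r_q$ is the range from the $q$-th activated element (located at $qMd_0$ on the $y$-axis) to the point $(r,\theta)$ and $r_{0,q}$ is the corresponding range to $(r_0,\theta_0)$. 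The phase in the $q$-th summand is then $\frac{2\pi}{\lambda}(r_{0,q}-r_q)$.

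Next I would apply the Fresnel approximation stated just above the lemma, namely $r_q \approx r - qMd_0\theta + \frac{q^2(Md_0)^2(1-\theta^2)}{2r}$ and the analogous expression for $r_{0,q}$ with $(r_0,\theta_0)$. Subtracting, the constant terms $r_0 - r$ contribute only a $q$-independent global phase, which is annihilated by the outer absolute value and may be dropped. The linear-in-$q$ terms combine to $qMd_0(\theta-\theta_0) = qMd_0\Delta_{\rm NF}$, and since $d_0 = \lambda/2$ the factor $\frac{2\pi}{\lambda}\cdot d_0 = \pi$, producing the term $B_1 = \pi q M \Delta_{\rm NF}$. The quadratic-in-$q$ terms combine to $\frac{q^2(Md_0)^2}{2}\left(\frac{1-\theta_0^2}{r_0}-\frac{1-\theta^2}{r}\right) = \frac{q^2(Md_0)^2}{2}\Phi$, and multiplying by $\frac{2\pi}{\lambda}$ gives $B_2 = \frac{\pi}{\lambda}q^2(Md_0)^2\Phi$. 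Collecting these yields exactly the claimed expression $\frac{1}{Q}\left|\sum_{q\in\mathcal{Q}}\exp(\jmath\pi q M\Delta_{\rm NF} + \jmath\frac{\pi}{\lambda}q^2(Md_0)^2\Phi)\right|$.

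The only genuine subtlety — and hence the step I would flag as the main obstacle — is the justification of step $(a_1)$, i.e., controlling the error incurred by replacing the exact $r_q$ by its Fresnel expansion over the \emph{sparse} array. The aperture of the effective SLA is the same as that of the full array (the outermost elements are unchanged), so the standard bound that the phase error stays below $\pi/8$ whenever $r > 1.2D$ carries over verbatim; I would simply cite the finite-aperture analysis of \cite{kosasih2023finite} and \cite{liu2023near} for this, rather than re-deriving it. Beyond that, the proof is a routine bookkeeping of constant, linear, and quadratic powers of the index $q$, together with the substitution $d_0 = \lambda/2$ to simplify the linear coefficient, so I would present it compactly and leave the elementary algebra implicit.
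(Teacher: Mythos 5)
Your computation is correct and follows exactly the route the paper takes: the lemma is obtained by writing out the SLA inner product, substituting the Fresnel expansion of $r_q$ and $r_{0,q}$, and collecting the constant (dropped by the modulus), linear ($B_1$, using $d_0=\lambda/2$), and quadratic ($B_2$) terms in $q$, with the approximation error handled by citation just as the paper's step $(a_1)$ does. No gaps to flag.
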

To obtain useful insights, we further characterize the beam pattern of $ \hat{f}({r},{\theta};\bar{\mathbf{v}},M ) $ in the following two cases, depending on the value of $ B_2 $ in \eqref{LSAsum} (which will be shown to be determined by $ M $ equivalently).
\begin{itemize}
	\item {\bf Case 1}: $ B_2 \in \mathcal{U} \triangleq \{ B_2 = \frac{\alpha}{2}\pi, \forall q \in \mathcal{Q} \}$, where $ \alpha \in \mathcal{Z}\setminus \{0\}$ and $ \mathcal{Z} $ denotes the integer set.
	In this case, we have $ e^{\jmath B_2} = \pm 1 ~{\text{or}}~ \pm\jmath$.
	As such, it can be shown in Section \ref{sec:Desired multi-beam pattern} that the beam pattern $ \hat{f}({r},{\theta};\bar{\mathbf{v}},M ) $ in \eqref{LSAsum} will degenerate to the far-field scenario in the angular domain for multiple user ranges. This renders the same received signal power at multiple ranges  at the user angle, hence affecting the beam determination in the range domain.
	\item {\bf Case 2}: When $ B_2 \notin \mathcal{U} $, it is the desired multi-beam pattern.
\end{itemize}

\begin{figure}
	\centering
	\includegraphics[width=8.5cm]{./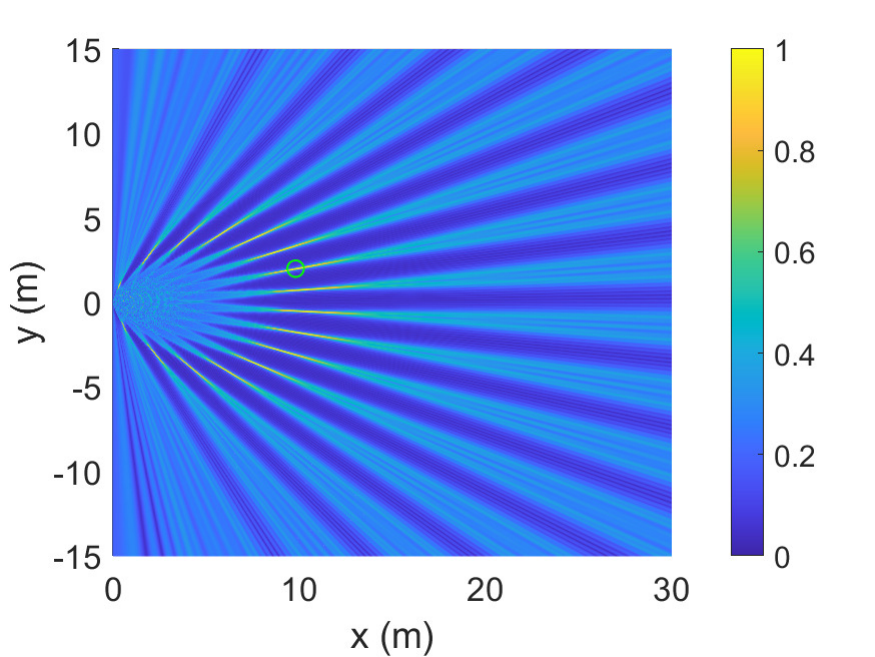}
	\caption{Near-field beam pattern of the beamforming vector $ \mathbf{v} $ sampling from $ \mathbf{w} $, where $N=257$, $ M = 16 $ and $f=30$ GHz. The codeword $ \mathbf{w} $ steers the beam towards $ \theta_0 = 0.2, r_0 =10 $ m which is marked by the green ring.}
	\label{fig:gratingLobe}
	\vspace{-5pt}
\end{figure}
\subsubsection{Desired multi-beam pattern}
\label{sec:Desired multi-beam pattern}
We first consider Case 2 for which $ B_2 \notin \mathcal{U} $, while the corresponding occurrence condition will be explicitly presented in Section \ref{sec:extraRings}.
For this case, we analytically show below that the array-sparse-activation method essentially provides a multi-beam pattern for near-field communications.
\begin{proposition}[Near-field multi-beam pattern]
	\label{theorem:multi-beam}
	{\rm Given $ B_2 \notin \mathcal{U}$, the beamforming vector $\bar{\mathbf{v}}$ based on array sparse activation generates a multi-beam pattern, with $ M $ beams focused on
	\begin{equation}
		\begin{aligned}
			\label{eq:gratingLobeAngle}
			\theta_{m} = \theta_0 + \frac{2m}{M},
			r_{m} = r_0\frac{1- \theta_m^2}{1-\theta_0^2},
			 \forall m \in \mathcal{M},
		\end{aligned}
	\end{equation}
	where $ \mathcal{M}  \triangleq  \{0, \pm 1, \!\dots,\! \pm ( M - 1)\} \cap \{m | -1 \le  \theta_0\!+\!\frac{2m}{M} \le 1 \} $.
	Specifically, all the sub-beams have the same beam width
	\begin{equation}
		\label{eq:beam-width}
		{\rm BW}_{\bar{\mathbf{v}}, m} = \frac{4}{QM}, \forall m \in \mathcal{M}.
	\end{equation} 
	Moreover, the beam-depths of the sub-beams are given by
	\begin{equation}
		\label{eq:beam-depth}
		\mathrm{BD}_{\bar{\mathbf{v}}, m}= \begin{cases}\frac{2 r_{m}^2 r_{\mathrm{BD},m}}{r_{\mathrm{BD},m}^2-r_{m}^2}, & r_{m}<r_{\mathrm{BD},m} \\ \infty, & r_{m} \geq r_{\mathrm{BD},m}\end{cases}, \forall m \in \mathcal{M},
	\end{equation}
	\rm where $ r_{\mathrm{BD},m} \approx \frac{
		Q^{2} M^{2}d_0 (1-\theta^{2}_{m})} {4  \varphi _{3 \mathrm{dB}}^2}$ and $ \eta _{3\rm dB} = 1.25 $ \cite{9723331}.
}
\end{proposition}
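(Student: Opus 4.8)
The plan is to read off all three quantities directly from the phase representation \eqref{LSAsum}, exploiting the quasi-periodicity of the linear phase term $B_1$ in $\Delta_{\rm NF}$ and the purely quadratic structure of $B_2$ in $q$. The $q$-th summand of \eqref{LSAsum} carries phase $B_1+B_2=\pi qM\Delta_{\rm NF}+\tfrac{\pi}{\lambda}q^2(Md_0)^2\Phi$. I would first evaluate at $\theta=\theta_m\triangleq\theta_0+\tfrac{2m}{M}$, which makes $B_1=2\pi qm\equiv0\pmod{2\pi}$ for \emph{every} integer $q$, and then impose $\Phi=0$, equivalently $\tfrac{1-\theta_m^2}{r}=\tfrac{1-\theta_0^2}{r_0}$, i.e.\ $r=r_m=r_0\tfrac{1-\theta_m^2}{1-\theta_0^2}$, which also kills $B_2$. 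Hence every summand equals $1$, so $\hat f(r_m,\theta_m;\bar{\mathbf v},M)=1$, the global maximum; this identifies $(\theta_m,r_m)$ as a grating lobe for each admissible $m$. Admissibility is just $\theta_m\in[-1,1]$: consecutive lobes are spaced by $\tfrac2M$ in spatial angle while $[-1,1]$ has length $2$, so there are $M$ of them (whence $|m|\le M-1$ suffices to enumerate them), which gives the set $\mathcal M$.

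For the beam-width, by Definition~\ref{Def:width} the width is measured along the user-ring $\tfrac{1-\theta^2}{r}=\tfrac{1-\theta_0^2}{r_0}$, on which $\Phi\equiv0$ and thus $B_2\equiv0$. On that ring \eqref{LSAsum} collapses to a Dirichlet kernel in $\Delta_{\rm NF}$, namely $\hat f=\tfrac1Q\bigl|\sin(\pi QM\Delta_{\rm NF}/2)/\sin(\pi M\Delta_{\rm NF}/2)\bigr|$, whose main lobe at $\Delta_{\rm NF}=\tfrac{2m}{M}$ (matching the locations above) is flanked by the nearest nulls at $\Delta_{\rm NF}=\tfrac{2m}{M}\pm\tfrac{2}{QM}$; subtracting gives the null-to-null width $\tfrac{4}{QM}$, identical for every $m$ by the periodicity of the kernel.

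For the beam-depth, I would instead fix $\theta=\theta_m$, so that $B_1\equiv0$ and $\Phi(r)=(1-\theta_m^2)\bigl(\tfrac1{r_m}-\tfrac1r\bigr)$, leaving $\hat f=\tfrac1Q\bigl|\sum_{q\in\mathcal Q}e^{\jmath(\pi/\lambda)q^2(Md_0)^2\Phi(r)}\bigr|$. Approximating this quadratic-phase sum by a Fresnel integral in the manner of \cite{9723331} gives $\hat f\approx\sqrt{C^2(\zeta)+S^2(\zeta)}/\zeta$, with $\zeta=\tfrac{QMd_0}{2}\sqrt{2|\Phi(r)|/\lambda}$ and $C,S$ the Fresnel cosine/sine integrals; solving $\hat f^2=\tfrac12$ pins the threshold to the tabulated value $\zeta=\eta_{3\rm dB}\approx1.25$. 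Translating back, this reads $\bigl|\tfrac1{r_m}-\tfrac1r\bigr|=1/r_{\mathrm{BD},m}$ with $r_{\mathrm{BD},m}\approx\tfrac{Q^2M^2d_0(1-\theta_m^2)}{4\eta_{3\rm dB}^2}$ after substituting $d_0=\lambda/2$, so the $3$-dB endpoints satisfy $\tfrac1{r_{\rm small}}=\tfrac1{r_m}+\tfrac1{r_{\mathrm{BD},m}}$ and $\tfrac1{r_{\rm large}}=\tfrac1{r_m}-\tfrac1{r_{\mathrm{BD},m}}$. When $r_m<r_{\mathrm{BD},m}$ both are positive and a one-line simplification of $r_{\rm large}-r_{\rm small}$ yields $\tfrac{2r_m^2 r_{\mathrm{BD},m}}{r_{\mathrm{BD},m}^2-r_m^2}$; when $r_m\ge r_{\mathrm{BD},m}$ the second endpoint is non-positive, meaning the $-3$-dB level is never crossed moving outward, hence ${\rm BD}_{\bar{\mathbf v},m}=\infty$.

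Steps one and two are exact manipulations once the Fresnel phase approximation $(a_1)$ of the Lemma is granted, so the load-bearing step is the beam-depth analysis: the passage from the discrete quadratic-phase sum to the Fresnel-integral expression (which requires $Q$ moderately large so the Riemann-sum approximation is tight and the edge terms negligible) and the numerical solution of $\sqrt{C^2(\zeta)+S^2(\zeta)}/\zeta=1/\sqrt2$ that fixes $\eta_{3\rm dB}$; the remainder is bookkeeping with the Dirichlet kernel and with reciprocal-range arithmetic.
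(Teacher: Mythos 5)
Your proposal is correct in substance and follows essentially the same route as the paper, whose own proof is only a three-condition sketch deferring to \cite{zhou2024sparse}: your user-ring analysis ($\Phi=0$, Dirichlet kernel, nulls at $\Delta_{\rm NF}=2m/M\pm 2/(QM)$) is the paper's Condition 1 and matches Lemma~\ref{lemma:LSA_AngularDomain}; your fixed-angle analysis ($\theta=\theta_m$, quadratic-phase sum approximated by the Fresnel integral, then the reciprocal-range bookkeeping giving $\frac{2r_m^2 r_{\mathrm{BD},m}}{r_{\mathrm{BD},m}^2-r_m^2}$ or $\infty$) is the paper's Condition 2 and matches Lemma~\ref{lemma:LSA_DistanceDomain}; and your peak identification at $(\theta_m,r_m)$, where $B_1\equiv 0\ (\mathrm{mod}\ 2\pi)$ and $B_2=0$ so that $\hat f=1$, is the common starting point. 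You actually supply more detail than the paper does, and your normalization $\zeta=\frac{QMd_0}{2}\sqrt{2|\Phi|/\lambda}$ coincides with the paper's $\gamma=\frac{QM}{2}\sqrt{d_0|\Phi|}$ once $d_0=\lambda/2$ is inserted.

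Two remarks. First, the one element of the paper's argument you omit is its Condition 3: the regime where $B_1\not\equiv 0\ (\mathrm{mod}\ 2\pi)$ and $B_2\neq 0$ simultaneously, i.e., points lying neither on the user-ring nor at the beam angles. Showing the pattern is insignificant there (and this is also where the hypothesis $B_2\notin\mathcal U$, which excludes the Type-I/II/III abnormal rings of Lemma~\ref{lemma:Properties of Type II extra rings}, genuinely enters) is what upgrades ``these $M$ locations are unit-gain lobes with width \eqref{eq:beam-width} and depth \eqref{eq:beam-depth}'' to ``the codeword generates a multi-beam pattern with $M$ beams,'' i.e., completeness of the lobe enumeration; your write-up should state this third case, even if only at the level of the paper's own one-line claim that the sum is only significant as $\Delta_{\rm NF}\to 0$. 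Second, a minor numerical point: solving $\bigl(C^2(\zeta)+S^2(\zeta)\bigr)/\zeta^2=1/2$ per Definition~\ref{Def:Depth} gives a root nearer $1.3$ than $1.25$; the value $1.25$ is simply quoted from \cite{9723331}, and note the paper itself is inconsistent in writing $\varphi_{3\rm dB}$ (elsewhere set to $1.6$ from the amplitude condition $F(\gamma)\ge 1/2$) inside $r_{\mathrm{BD},m}$ while quoting $\eta_{3\rm dB}=1.25$; your derivation is consistent with the 3-dB power definition, so this does not affect the structure of your argument.
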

\begin{proof}
	We can divide the proof into three conditions.
	1) Condition 1: $ B_1 \neq  2\alpha \pi$ and $ B_2 = 0$ with $ \alpha \in \mathcal{Z}$ as $ q $ changes.
	In Condition 1, the user ring is obtained where we can obtain the beam-width and angles of multi-beams.
	2) Condition 2: $ B_1 = 2\alpha \pi$ and $ B_2 \neq 0$ with $ \alpha \in \mathcal{Z}$ as $ q $ changes.
	In Condition 2, we can obtain the beam-depth in $ M $ multiple beam angles.
	3) Condition 3: $ B_1 \neq 2\alpha \pi$ and $ B_2 \neq 0$ for arbitrary integer $ \alpha $ as $ q $ changes.
	We can show that only when $ \Delta_{NF} \to 0 $, the beam pattern has significant values.
	Based on the above, \eqref{eq:gratingLobeAngle}--\eqref{eq:beam-depth} can be obtained by following similar procedures in \cite{zhou2024sparse}, which are omitted due to limited space.
\end{proof}

According to Proposition \ref{theorem:multi-beam}, the beamforming vector based on array sparse activation generates $ M $ beams in the polar domain, which are focused on different locations of the user-ring.
Among the $ M $ beams, one is the main-lobe, while the others are grating-lobes induced by the SLA, which share the same beam-width and duplicate beam-depth in the corresponding angles \cite{zhou2024sparse}.

\begin{remark}[Multi-beam codebook]
	\label{remark:periodicity in the agular domain}
	{\rm Proposition \ref{theorem:multi-beam} indicates that in the near-field, a beamforming vector $ \bar{\mathbf{v}} $ with the array activation parameter $ M $ generates $ M $ beams with a }  {\rm period of $ {2}/{M} $ in the angular domain, as illustrated in Fig. \ref{fig:gratingLobe}. Therefore, we can divide the entire spatial angular domain $ [-1,1] $ into $ M $ uniform sectors. Then, we only need to steer single-angle beams in the angular subspace} $ [-\frac{1}{M},\frac{1}{M}) $ {\rm (a period)} {\rm , for which $ (M-1) $ beams (grating-lobs) will appear in the other subspaces.
}
\end{remark}

In the next subsection, we consider the other case $B_2 \in \mathcal{U}$, where several abnormal rings appear which may affect the performance of multi-beam training method to be designed.
\subsubsection{Undesired multi-ring beam pattern}
\label{sec:extraRings}
When $ B_2 \in \mathcal{U} $, we have $ e^{\jmath B_2} = \pm 1 ~{\text{or}}~ \pm\jmath$. In this case, $ \Phi = \frac{2\alpha}{M^{2}d_{0}}$, which equivalently enforces
\begin{equation}
	\frac{1- \theta^2}{r} = \frac{-\alpha}{M^{2}d_{0}} + \frac{1-\theta_0^2}{r_0}, \forall \alpha.
\end{equation}
For ease of exposition, we assume that $ (Q-1)/2 $ is an even number.
Moreover, when $ \alpha > 0 $ and $ \theta = \theta_0 $, we always have $ r > r_0 $, which means that the abnormal rings are outside the user ring, and vice versa. 
Therefore, we call the abnormal rings as outer abnormal rings with $ \alpha > 0 $ and inner abnormal rings when $ \alpha < 0 $.
In particular, the abnormal rings can be classified into three categories, simply named as the {\emph{ Type-I, Type-II and Type-III rings}} depending on the value of $ \alpha $.
In the following, we characterize the beam pattern of the abnormal rings as follows.
\begin{lemma}
	\label{lemma:Properties of Type II extra rings}
	\emph{Given $ \frac{1- \theta^2}{r} = \frac{-\alpha}{M^{2}d_{0}} + \frac{1-\theta_0^2}{r_0}, \forall \alpha \in \mathcal{Z} $, the beam pattern of codeword $\bar{\mathbf{v}}$, i.e., $ \hat{f}({r},{\theta};\bar{\mathbf{v}},M ) $ in \eqref{LSAsum}, reduces to
		{\small 
			\begin{equation}
				\label{eq:extra-ring beam pattern}
				\hat{f}({r},{\theta};{\bar{\mathbf{v}}},M)\!\!=\!\!
				\left\{\begin{matrix}
					\!\!\!\!\!\!\!\!\!\!\!\!\!\!\!\!\!\!\!\!\!\!\!\!\!\!\!\!\!\!\!\!\!\!\!\!\!\!\!\!\!\!\!\!\!\!\!\!\!\!\!\!\!\!\!\!\frac{\left|\Xi_{Q}(M \Delta_{\rm NF} )\right|}{Q}, {\text{when}}~\alpha \!\!=\!\! 4k_1, 
					\\ \!\!\!\frac{\left|\Xi_{\frac{Q+1}{2}}(2M\Delta_{\rm NF}) \!-\! \Xi_{\frac{Q-1}{2}}(2M\Delta_{\rm NF}) \right|}{Q}, {\text{when}}~\alpha \!\!=\!\! 4k_2 \!\!+ \!\!2,
					\\ \!\frac{\left|\Xi_{\frac{Q+1}{2}}(2M\Delta_{\rm NF}) \!+\! \jmath\Xi_{\frac{Q-1}{2}}(2M\Delta_{\rm NF}) \right|}{Q}, {\text{when}}~\alpha \!\!=\!\! 4k_3\!\!+\!\!1,\\
					\!\frac{\left|\Xi_{\frac{Q+1}{2}}(2M\Delta_{\rm NF}) \!-\! \jmath\Xi_{\frac{Q-1}{2}}(2M\Delta_{\rm NF}) \right|}{Q}, {\text{when}}~ \alpha \!\!=\!\! 4k_4\!\!+\!\! 3,\\
				\end{matrix}\right.		
	\end{equation}}where $ k_1 \in \mathcal{Z} \setminus \{0\} $, $ k_2, k_3$ and $ k_4 \in \mathcal{Z}$. Moreover, $ \Xi_{\alpha}(x) \triangleq \frac{\sin(\frac{\alpha x \pi}{2})}{\sin(\frac{x \pi}{2})}$  is defined as the Dirichlet Sinc function.}
\end{lemma}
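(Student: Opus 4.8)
The plan is to start from the Fresnel-domain expansion $\hat f(r,\theta;\bar{\mathbf v},M)=\frac{1}{Q}\left|\sum_{q\in\mathcal{Q}}e^{\jmath(B_1+B_2)}\right|$ in~\eqref{LSAsum}, with $B_1=\pi qM\Delta_{\rm NF}$ and $B_2=\frac{\pi}{\lambda}q^2(Md_0)^2\Phi$, and to collapse the quadratic phase using the ring hypothesis. Substituting $\frac{1-\theta^2}{r}=\frac{-\alpha}{M^2d_0}+\frac{1-\theta_0^2}{r_0}$ into the definition $\Phi=\frac{1-\theta_0^2}{r_0}-\frac{1-\theta^2}{r}$ forces $\Phi=\frac{\alpha}{M^2d_0}$, and with $d_0=\lambda/2$ this gives $B_2=\frac{\pi\alpha}{2}q^2$, i.e. $e^{\jmath B_2}=\jmath^{\alpha q^2}$; meanwhile $B_1$ is unchanged and $\Delta_{\rm NF}=\theta-\theta_0$ remains a free variable as $(r,\theta)$ sweeps the ring, so $\hat f$ stays a genuine function of $\Delta_{\rm NF}$ along it. The constant phase left over in~\eqref{LSAsum} is immaterial since only the magnitude enters.

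The key observation is that $\jmath^{\alpha q^2}$ depends only on the parity of the antenna index: $q^2\equiv0\pmod{4}$ for even $q$ and $q^2\equiv1\pmod{4}$ for odd $q$, so $e^{\jmath B_2}=1$ on the even indices of $\mathcal{Q}$ and $e^{\jmath B_2}=\jmath^{\alpha}$ on the odd ones. Hence $\sum_{q\in\mathcal{Q}}e^{\jmath(B_1+B_2)}=S_{\mathrm{e}}+\jmath^{\alpha}S_{\mathrm{o}}$, where $S_{\mathrm{e}}$ and $S_{\mathrm{o}}$ collect $e^{\jmath\pi qM\Delta_{\rm NF}}$ over the even and the odd indices, respectively. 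The four-way split in the statement is then simply $\jmath^{\alpha}\in\{1,\jmath,-1,-\jmath\}$ according to $\alpha\bmod 4$, which also explains where the imaginary unit in the stated formula comes from.

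It then remains to identify $S_{\mathrm{e}}$ and $S_{\mathrm{o}}$ in closed form, and this is where the hypothesis that $(Q-1)/2$ is even is used: it ensures that the even indices of $\mathcal{Q}$ form the symmetric step-$2$ progression $\{0,\pm2,\dots,\pm\frac{Q-1}{2}\}$ of cardinality $\frac{Q+1}{2}$, and the odd indices the symmetric progression $\{\pm1,\dots,\pm\frac{Q-3}{2}\}$ of cardinality $\frac{Q-1}{2}$. Re-indexing each by $q=2q'$ turns $S_{\mathrm{e}}$ and $S_{\mathrm{o}}$ into ordinary Dirichlet kernels, which, after matching the normalization $\Xi_\alpha(x)=\sin(\tfrac{\alpha x\pi}{2})/\sin(\tfrac{x\pi}{2})$, read $S_{\mathrm{e}}=\Xi_{\frac{Q+1}{2}}(2M\Delta_{\rm NF})$ and $S_{\mathrm{o}}=\Xi_{\frac{Q-1}{2}}(2M\Delta_{\rm NF})$, both real-valued by the centered symmetry. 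Substituting into $\frac{1}{Q}|S_{\mathrm{e}}+\jmath^{\alpha}S_{\mathrm{o}}|$ and separating the residue classes $\alpha\equiv 4k_2+2,\ 4k_3+1,\ 4k_4+3$ yields the last three lines; for $\alpha=4k_1$ one has $e^{\jmath B_2}\equiv1$, so the sum is the single kernel $\Xi_Q(M\Delta_{\rm NF})$ over all $Q$ indices, giving the first line (consistently with $\Xi_Q(x)=\Xi_{\frac{Q+1}{2}}(2x)+\Xi_{\frac{Q-1}{2}}(2x)$).

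I expect the only real obstacle to be the combinatorial bookkeeping in this last step: splitting $\mathcal{Q}$ into its even and odd sub-lattices correctly, checking that the two cardinalities add to $Q$, confirming that the centered symmetry makes each partial sum real, and lining the re-indexed geometric sums up with the exact $\Xi_\alpha$ normalization. The collapse of $B_2$ and the parity argument are short, and the remaining trigonometric-sum manipulations parallel those already used for the grating-lobe analysis behind Proposition~\ref{theorem:multi-beam}.
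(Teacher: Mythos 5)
Your proposal is correct and follows essentially the same route as the paper's own proof: substitute the ring condition so that $B_2=\tfrac{\pi\alpha}{2}q^2$, split $\mathcal{Q}$ into even and odd indices (using the standing assumption that $(Q-1)/2$ is even), evaluate each centered geometric sum as a Dirichlet kernel $\Xi_{\frac{Q+1}{2}}(2M\Delta_{\rm NF})$ or $\Xi_{\frac{Q-1}{2}}(2M\Delta_{\rm NF})$, and branch on $\alpha\bmod 4$. Your unified phrasing via $e^{\jmath B_2}=\jmath^{\alpha q^2}$ with $q^2\bmod 4$ is just a cleaner packaging of the paper's case-by-case treatment, and your consistency identity $\Xi_Q(x)=\Xi_{\frac{Q+1}{2}}(2x)+\Xi_{\frac{Q-1}{2}}(2x)$ checks out.
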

\begin{proof}
	For $ \alpha = 4k_1 $, we have $ \exp(\jmath B_2) =1 $ for arbitrary $ q $.
	Hence, the beam pattern is given by
	{\small
		\begin{equation*}
			\begin{aligned}
				\sum\limits_{q \in \mathcal{Q}}\!\exp(\jmath B_1) &\!\!=\!\! \frac{\exp(\jmath M\Delta_{\rm NF}\frac{-Q+1}{2})\left(1-\exp(\jmath QM\Delta_{\rm NF})\right)}{1-\exp(\jmath M\Delta_{\rm NF})}\\
				&=\Xi_{Q}(2M\Delta_{\rm NF})
			\end{aligned}
	\end{equation*}}For $ \alpha = 4k_1 $, we have $ \hat{f}({r},{\theta};\bar{\mathbf{v}},M) = \frac{\Xi_{Q}(2M\Delta_{\rm NF})}{Q}$.  For $ \alpha = 4k_2 + 2 $, when $ q $ is an even (or odd) number, we always have $ \exp(\jmath B_2) =1 $ (or $ \exp(\jmath B_2) = -1 $).
	Hence, we have {\small$\hat{f}({r},{\theta};\bar{\mathbf{v}},M) = \frac{1}{Q}\left| \sum\limits_{q\in \{\mathcal{Q} \cap  \mathcal{E}\}}\exp(\jmath B_1) - \sum\limits_{q\in \{\mathcal{Q} \cap  \mathcal{J}\}}\exp(\jmath B_1) \right|$}, where  $ \mathcal{J} $ denotes the odd integer set. Then, it can be shown 
{\small
	\begin{equation*}
	\begin{aligned}
	\sum\limits_{q\in \{\mathcal{Q} \cap  \mathcal{E}\}}\exp(\jmath B_1) =\Xi_{\frac{Q+1}{2}}(2M\Delta_{\rm NF}).
	\end{aligned}
	\end{equation*}}Similarly, we have $ \sum\limits_{q\in \{\mathcal{Q} \cap \mathcal{J}\}}\!\!\!\exp(\jmath B_1) =  \Xi_{\frac{Q-1}{2}}(2M\Delta_{\rm NF})$.
	Hence, when $ \alpha = 4k_2 + 2 $, we have {\small $ \hat{f}({r},{\theta};\bar{\mathbf{v}},M) = \frac{1}{Q}\left|\Xi_{\frac{Q+1}{2}}(2M\Delta_{\rm NF}) \!\!-\!\! \Xi_{\frac{Q-1}{2}}(2M\Delta_{\rm NF}) \right| $}.
	Similar to the proof for the case  $ \alpha = 4k_2 + 2 $, we can obtain the results for $ \alpha = 4k_2 + 1 $ and $ \alpha = 4k_4 + 3 $ as shown in \eqref{eq:extra-ring beam pattern}. 
\end{proof}
Next, we give the properties of the above abnormal rings and their occurrence conditions.
\begin{figure}
	\centering
	\includegraphics[width=8.6cm]{./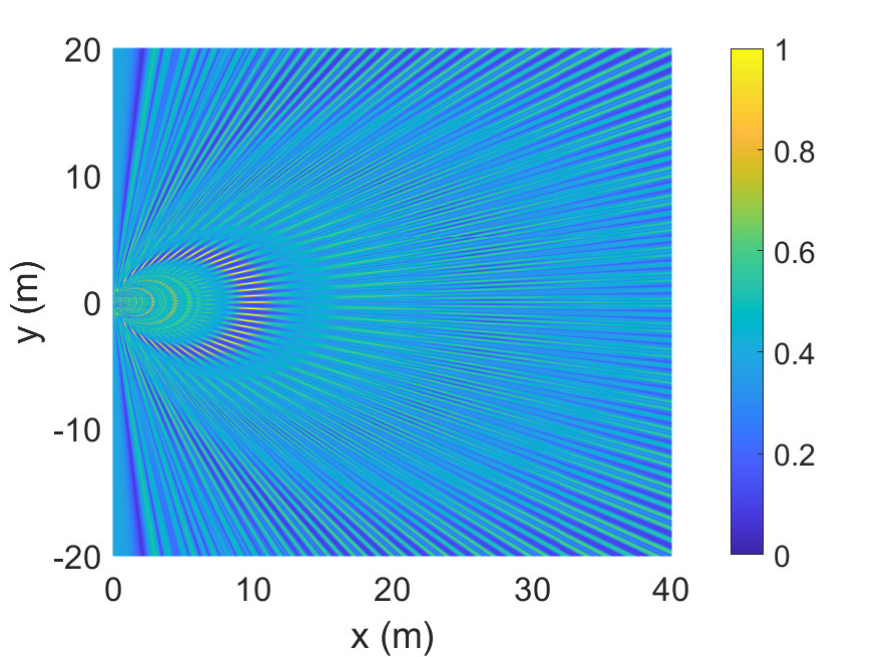}
	\caption{Near-field beam pattern with Type-III rings when $N=321$, $ M = 40 $ and $f=30$ GHz. The codeword steers the beam towards $ \theta_0 = 0, r_0 =10  $ {\rm m}.}
	\label{fig:differentRings}
	\vspace{-5pt}
\end{figure}
\begin{itemize}
	\item \textbf{Type-I ring}: For Type-I rings, we have  $ \alpha = 4k_1 $.
	The beam pattern in \eqref{eq:extra-ring beam pattern} indicates that when $ \theta = \theta_0 + \frac{2\gamma}{M}, \gamma = \pm 1, \cdots, \pm (M-1) $, there are $ M $ undesired beams on Type-I rings with $ \hat{f}({r},{\theta};\bar{\mathbf{v}},M ) = 1 $.
	\item \textbf{{{Type-II ring}}}:
	For Type-II rings, we have $ \alpha = 4k_2 + 2 $ and thus $ \hat{f}_{}({r},{\theta};\bar{\mathbf{v}},M) = 1 $ when $ \theta = \theta_0 + \frac{2\gamma+1}{M}, \gamma = \pm 1, \cdots, \pm (M-1) $.
	Moreover, we have $ \hat{f}_{}({r},{\theta};\bar{\mathbf{v}},M) = \frac{1}{Q} $, when $ \theta = \theta_0 + \frac{2\gamma}{M}, \gamma = \pm 1, \cdots, \pm (M-1) $
	\item \textbf{{Type-III ring}}:
	For Type-III rings, we have $ \alpha = 4k_3 + 1 $ or $ \alpha = 4k_4 + 3 $. Moreover, we can show that $ \hat{f}_{}({r},{\theta};\bar{\mathbf{v}},M) = \frac{|\frac{Q+1}{2}+\frac{\jmath(Q-1)}{2}|}{Q} \approx \frac{\sqrt{2}}{2} $ when $ \theta = \theta_0 + \frac{\gamma}{M}, \gamma = \pm 1, \cdots, \pm (2M-1) $ as illustrated in Fig. \ref{fig:differentRings}.
\end{itemize}

It is worth noting that for Type-I and Type-II rings, there exist multiple ranges that share the same beam power at the user angle, thus affecting the user range determination.
On the other hand, Type-III affects the polar-domain codebook design when the range sampling method in \cite{Cui2022channel} is applied, for which the coherence between adjacent codewords in the same angle is no larger than 1/2.
Although we can resolve the issue caused by Type-III rings by increasing the number of range samples, it may increase the beam training overhead.
Therefore, when designing the antenna activation interval $ M $, we should avoid occurrence of all these three types of abnormal rings.

The necessary conditions for the occurrence of the above-mentioned abnormal rings are given below.
\begin{proposition}
	\label{pro:peak-rings}
	{\rm Given a codeword $ \bar{\mathbf{v}} $ parameterized by} $ M $, {\rm if} $ M > M_{\rm th} \triangleq { \sqrt{1.2(N-1)}  } ${\rm, the inner and outer abnormal rings will simultaneously appear in the Fresnel region.}
\end{proposition}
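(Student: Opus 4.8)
The plan is to carry everything over into the ``ring coordinate'' $g\triangleq\frac{1-\theta^2}{r}$, in which both the Fresnel zone and the whole family of abnormal rings become elementary, and then reduce Proposition~\ref{pro:peak-rings} to a one-dimensional spacing-versus-width comparison.

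First I would extract two facts. (i) From the $B_2\in\mathcal U$ analysis of Section~\ref{sec:extraRings} and Lemma~\ref{lemma:Properties of Type II extra rings}: for the codeword $\bar{\mathbf v}=\mathbf b(r_0,\theta_0)$ the abnormal rings are exactly the loci $\frac{1-\theta^2}{r}=\frac{1-\theta_0^2}{r_0}-\frac{\alpha}{M^2 d_0}$ over nonzero integers $\alpha$; equivalently, in the $g$-coordinate they form a uniform lattice $g_\alpha=g_0-\alpha\,\Delta g$ with spacing $\Delta g\triangleq\frac{1}{M^2 d_0}$ around the user ring $g_0\triangleq\frac{1-\theta_0^2}{r_0}$, with $\alpha>0$ the outer rings ($r>r_0$) and $\alpha<0$ the inner rings. (ii) A translation of ``appears in the Fresnel region'' into a scalar test: a point $(r,\theta)$ lies in the Fresnel zone iff $\frac{1-\theta^2}{Z_{\rm R}}\le g\le\frac{1-\theta^2}{Z_{\rm F}}$, and since \eqref{eq:extra-ring beam pattern} shows that on an abnormal ring the beam pattern is non-negligible on a $\theta$-grid of step at most $1/M$ (dense for the $M$ of interest, including $\theta\approx\theta_0$), the ring $g_\alpha$ meets the Fresnel zone essentially iff $0< g_\alpha\le\frac{1-\theta^2}{Z_{\rm F}}$ for some admissible $\theta$, i.e.\ iff $0<g_\alpha\le\frac{1}{Z_{\rm F}}$ (upper bound at $\theta\approx0$; the $r\le Z_{\rm R}$ side is automatic in the relevant range of $M$ because $Z_{\rm R}/Z_{\rm F}=\Theta(N)$).

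With these, the argument is a short lattice computation. For an \emph{outer} ring the nearest candidate is $\alpha=1$, which is in the Fresnel zone iff $g_0-\Delta g>0$; ranging over admissible users, $g_0=\frac{1-\theta_0^2}{r_0}$ attains every value in $(0,\frac{1}{Z_{\rm F}}]$, so such a ring exists iff $\Delta g<\frac{1}{Z_{\rm F}}$. Substituting $\Delta g=\frac{1}{M^2 d_0}$ and $Z_{\rm F}=1.2D=1.2(N-1)d_0$ gives $M^2>1.2(N-1)$, i.e.\ $M>M_{\rm th}$. For an \emph{inner} ring the nearest candidate is $\alpha=-1$, in the Fresnel zone iff $g_0+\Delta g\le\frac{1}{Z_{\rm F}}$; since $g_0$ can be made arbitrarily small (take $r_0$ near $Z_{\rm R}$, or $|\theta_0|\to1$), this is again achievable exactly when $\Delta g\le\frac{1}{Z_{\rm F}}$, i.e.\ $M>M_{\rm th}$. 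Conversely, if $M\le M_{\rm th}$ then $\Delta g\ge\frac{1}{Z_{\rm F}}\ge g_0$ for every user, so no inner or outer abnormal ring can fall in the Fresnel zone. Hence $M>M_{\rm th}$ is precisely the threshold past which both inner- and outer-type abnormal rings are present there, which is the assertion of Proposition~\ref{pro:peak-rings}.

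The hard part will be step (ii): making rigorous the reduction of ``the abnormal-ring curve carries non-negligible beam power at some point of the Fresnel zone'' to the scalar test $0<g_\alpha\le1/Z_{\rm F}$, which requires pairing the $\theta$-density of the high-value points in \eqref{eq:extra-ring beam pattern} against the angular extent of the Fresnel zone and handling the degenerate behaviour near $\theta=\pm1$. A secondary, more cosmetic point is the word ``simultaneously'': for one \emph{fixed} user only one of the two nearest rings fits when $M_{\rm th}<M<\sqrt2\,M_{\rm th}$, so $M_{\rm th}$ is the threshold beyond which both \emph{types} of abnormal ring occur over the range of user locations, whereas a uniform single-user guarantee would cost an extra factor $\sqrt2$; I would state the result in this form and note that the constant $1.2$ is inherited entirely from $Z_{\rm F}=1.2D$ together with $Z_{\rm R}\gg Z_{\rm F}$.
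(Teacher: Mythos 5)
Your core computation is correct and lands on exactly the same inequality as the paper — comparing the ring spacing $\tfrac{1}{M^2 d_0}$ (your $\Delta g$) against $\tfrac{1}{Z_{\rm F}}=\tfrac{1}{1.2(N-1)d_0}$ — but you prove a stronger, two-sided statement than the paper does. The paper's own proof is purely a \emph{necessity} argument carried out per codeword: for inner rings ($\alpha<0$) it bounds the ring's maximum range by $r_{\max}<M^2 d_0$, so reaching $r\ge Z_{\rm F}=1.2D$ forces $M>\sqrt{1.2(N-1)}$; for outer rings it notes existence requires $\tfrac{1-\theta_0^2}{r_0}>\tfrac{1}{M^2 d_0}$, i.e. $M^2>\tfrac{r_0}{d_0(1-\theta_0^2)}\ge 1.2(N-1)$ since $r_0\ge 1.2D$. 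This matches the sentence preceding the proposition (``necessary conditions''), despite the proposition's ``if $M>M_{\rm th}$ \ldots will appear'' phrasing. You instead recast the loci as a uniform lattice in the coordinate $g=\tfrac{1-\theta^2}{r}$ and argue a threshold characterization: necessity uniformly in the codeword (your converse, which is essentially the paper's argument restated), plus sufficiency by ranging $g_0$ over $(0,1/Z_{\rm F}]$. Two consequences of that choice are worth noting. First, the ``hard part'' you flag in step (ii) — showing the beam pattern is actually non-negligible at a Fresnel-region point of the ring — is only needed for your sufficiency direction; the paper never needs it because a necessity argument can work with the geometric locus alone (any high-power point lies on the ring, so the range bound applies). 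Second, your $\sqrt{2}$ observation is a genuine and correct refinement: for one fixed codeword, having \emph{both} the first inner and first outer ring in the Fresnel region requires $\Delta g<g_0\le \tfrac{1}{Z_{\rm F}}-\Delta g$, hence $M>\sqrt{2}\,M_{\rm th}$, so the literal ``simultaneously appear'' reading of the proposition holds only over the range of user locations (or as a necessary condition), which is precisely how the paper's proof implicitly treats it. In short: same algebra, but your organization (lattice spacing, explicit quantifiers, converse) is more complete than the paper's necessity-only sketch, and the gap you flag is one the paper sidesteps rather than resolves.
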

\begin{proof}
	We first consider $ \alpha < 0 $ and have
	{$ r_{\rm max} =  {1}/\left({\frac{-\alpha}{M^{2}d_{0}} + \frac{1-\theta_0^2}{r_0}}\right)< \frac{M^{2}d_{0}}{-\alpha} \le {M^{2}d_{0}}.\vspace{-0.1cm}$} Hence, if the first inner abnormal ring exceeds Fresnel region, we have
	{$$ {M^{2}d_{0}} > 1.2D, $$}which satisfy {$ M > \sqrt{1.2(N-1)} $}. For $ \alpha > 0 $, when the outer abnormal rings appear, $ M $ should satisfy
	$\frac{-1}{M^{2}d_{0}} + \frac{1-\theta_0^2}{r_0} > 0. $
	Then, we have {$ M > \sqrt{\frac{r_0}{d_0 (1-\theta_0^2)}} > \sqrt{\frac{r_0}{d_0}} $}.
	For $ r_0 \ge 1.2D$, we have $ M > \sqrt{1.2(N-1)}$, thus completing the proof.
\end{proof}

Proposition \ref{pro:peak-rings} shows that when $ M $ is sufficiently large (i.e., $ M >M_{\rm th}$), there may appear undesired beam lobes and hence destroying the near-field multi-beam pattern. Nevertheless, it is worth noting that $ M_{\rm th} $ is practically large for XL-array systems.
For example, for the system setup with $ N = 513 $, the necessary condition for the occurrence of the abnormal rings is $ M_{\rm th} $ = 25, which is very large.

\vspace{-0.2cm}
\subsection{Multi-beam Codebook Design}
\label{multi-beam codebook design}
In this subsection, we design the new near-field multi-beam codebook.
Specifically, based on the multi-beam pattern generated by the proposed array-sparse-activation method, we only need to design the codebook in the angular subspace $ [-\frac{1}{M},\frac{1}{M})$. 
To this end, the angular and range domains are sampled as follows, by using similar methods in \cite{Cui2022channel}.

\subsubsection{Angle sampling method}
Similar to \cite{Cui2022channel}, we set $ \Phi = 0 $ (called  user-ring \cite{cui2022near}) to obtain the angular sampling method. Then, the near-field beam pattern in \eqref{LSAsum} is obtained as below.
\begin{lemma}\emph{
	\label{lemma:LSA_AngularDomain}
	When $\Phi=0$ (or equivalently $ \frac{1-\theta^2}{r} =  \frac{ 1-\theta_0^2}{r_0}$), the beam pattern $\hat{f}({r},{\theta};\bar{\mathbf{v}},M) $  in \eqref{LSAsum} is given by
	\begin{equation}
	\label{eq:Sinc}
	\begin{aligned}
	\hat{f}({r},{\theta};\bar{\mathbf{v}},M) = \frac{1}{Q}
	\left| \Xi_{Q}(M \Delta_{\rm NF} ) \right|.
	\end{aligned}		
	\end{equation}}
\end{lemma}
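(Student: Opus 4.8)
The plan is to start from the exact expression for the beam pattern in \eqref{LSAsum}, namely
\[
\hat{f}(r,\theta;\bar{\mathbf{v}},M) = \frac{1}{Q}\left| \sum_{q\in\mathcal{Q}} \exp\!\left(\jmath\pi q M \Delta_{\rm NF} + \jmath\frac{\pi}{\lambda} q^2 (M d_0)^2 \Phi\right)\right|,
\]
and then substitute the hypothesis $\Phi=0$. The quadratic-in-$q$ term $B_2 = \frac{\pi}{\lambda} q^2 (M d_0)^2 \Phi$ vanishes identically for every $q\in\mathcal{Q}$, so the sum collapses to a pure geometric series in the variable $e^{\jmath\pi M \Delta_{\rm NF}}$.

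Next I would evaluate that geometric sum over the symmetric index set $\mathcal{Q} = \{0,\pm1,\dots,\pm\frac{Q-1}{2}\}$. Writing $z \triangleq e^{\jmath\pi M \Delta_{\rm NF}}$, we have $\sum_{q\in\mathcal{Q}} z^q = z^{-(Q-1)/2}\,\frac{1-z^{Q}}{1-z}$, and after pulling out the symmetric phase factor this simplifies to $\frac{\sin(Q\pi M\Delta_{\rm NF}/2)}{\sin(\pi M\Delta_{\rm NF}/2)}$ — which is exactly $\Xi_{Q}(M\Delta_{\rm NF})$ by the definition $\Xi_{\alpha}(x) = \frac{\sin(\alpha x \pi/2)}{\sin(x\pi/2)}$ introduced in Lemma~\ref{lemma:Properties of Type II extra rings}. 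Taking the modulus and dividing by $Q$ gives $\hat{f} = \frac{1}{Q}|\Xi_{Q}(M\Delta_{\rm NF})|$, the claimed identity. This is essentially the same computation already carried out in the proof of Lemma~\ref{lemma:Properties of Type II extra rings} for the case $\alpha = 4k_1$ (where $e^{\jmath B_2}=1$ for all $q$), so I would either mirror that argument or simply invoke it, noting that $\Phi=0$ produces the same all-ones phase pattern $e^{\jmath B_2}\equiv 1$.

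Honestly, there is no real obstacle here: the lemma is a direct specialization of \eqref{LSAsum} plus a textbook geometric-series evaluation, and the Fresnel approximation underlying $(a_1)$ has already been absorbed into the definition of $\hat{f}$. The only minor points worth stating carefully are (i) that $\Phi = 0$ is equivalent to the user-ring condition $\frac{1-\theta^2}{r} = \frac{1-\theta_0^2}{r_0}$, which is immediate from the definition of $\Phi$, and (ii) handling the removable singularity at $\Delta_{\rm NF}=0$ (where both numerator and denominator of $\Xi_Q$ vanish and the value is $Q$, consistent with the sum of $Q$ unit-modulus terms all equal to $1$). I would present the proof in three short lines: set $\Phi=0$ so $B_2\equiv 0$; sum the geometric series over $\mathcal{Q}$; identify the result with $\Xi_Q(M\Delta_{\rm NF})$ and take the normalized modulus.
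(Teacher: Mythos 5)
Your proof is correct and is essentially the argument the paper relies on: the paper states this lemma without a separate proof, but the identical computation (set $e^{\jmath B_2}\equiv 1$, sum the geometric series over the symmetric index set $\mathcal{Q}$, identify the Dirichlet kernel $\Xi_Q$) appears verbatim in its proof of Lemma~\ref{lemma:Properties of Type II extra rings} for the case $\alpha=4k_1$, which you correctly note you could simply invoke. Incidentally, your bookkeeping of the argument is the right one, $\Xi_Q(M\Delta_{\rm NF})=\frac{\sin(Q\pi M\Delta_{\rm NF}/2)}{\sin(\pi M\Delta_{\rm NF}/2)}$, consistent with \eqref{eq:Sinc} and the first case of \eqref{eq:extra-ring beam pattern}, whereas the paper's proof text of Lemma~\ref{lemma:Properties of Type II extra rings} writes $\Xi_Q(2M\Delta_{\rm NF})$ there, which is a typo.
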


Obviously, when $ \frac{\pi QM\Delta_{\rm NF}}{2} = \gamma\pi$, i.e., $ \Delta_{\rm NF} =  \frac{2\gamma}{QM}$ with an arbitrary integer $ \gamma $, $ \hat{f}_{\rm NF}({r},{\theta};\bar{\mathbf{v}},M) $ equals to zero.
Therefore, in the angular domain, we should sample the angle with an interval of $ \frac{2}{QM} $ on the user-ring \cite{Cui2022channel}. Then, the sampled angles are given by
\begin{equation}
	\label{eq:angleSampling}
	\theta_s = \frac{2s - QM - 1}{QM}, \forall s = 1,\cdots,QM.
\end{equation}
 For the angular resolution, it is noted that the activated SLA for the codeword $ \bar{\mathbf{v}} $ has the same aperture with the original ULA, i.e., $ (Q - 1)Md_0 = (N-1)d_0 ${\rm. As such, we have} 
	\begin{equation}
	\label{eq:AngleBig}
		QM > N
		~~
		({\text{or equivalently}}~~ \frac{2}{QM} < \frac{2}{N}).
	\end{equation}
Combining \eqref{eq:angleSampling} and \eqref{eq:AngleBig}, it can be obtained that the angle sampling interval of the activated SLA is smaller than that of the ULA, thereby achieving a higher angular resolution for the beam training codebook.

\subsubsection{Range sampling method}
For the range domain, we set $ \theta = \theta_0 $ and design the range sampling method, similar as in \cite{Cui2022channel}. To this end, the beam pattern for \eqref{LSAsum} is approximated as follows \cite{zhou2024sparse}.
\begin{lemma}
	\label{lemma:LSA_DistanceDomain}
	\emph{When $ \theta = \theta_{0} $, the beam pattern in \eqref{LSAsum} can be approximated as \cite{zhou2024sparse}
		\begin{equation}
		{\hat{f}(r,\theta;\bar{\mathbf{v}},M) \approx}
		\left| F(\gamma)\right|,
		\end{equation}
		where $F(x)  \triangleq \frac{C(x)+\jmath S(x)}{x}$ with $\gamma = \frac{Q M}{2}\sqrt{{d_0}\left| \Phi \right|}$.
	}
\end{lemma}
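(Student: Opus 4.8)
The plan is to start from the approximate beam pattern \eqref{LSAsum}, specialize to the case $\theta=\theta_0$, and recognize the resulting exponential sum as a Riemann sum for a Fresnel integral. When $\theta=\theta_0$, the spatial angle difference vanishes, $\Delta_{\rm NF}=0$, so the linear phase term $B_1 = \pi q M \Delta_{\rm NF}$ drops out entirely and only the quadratic term $B_2 = \frac{\pi}{\lambda} q^2 (Md_0)^2 \Phi$ survives. Thus $\hat f(r,\theta_0;\bar{\mathbf v},M) = \frac{1}{Q}\bigl|\sum_{q\in\mathcal Q} \exp(\jmath \pi q^2 (Md_0)^2 \Phi/\lambda)\bigr|$, where now $\Phi = \frac{1-\theta_0^2}{r_0} - \frac{1-\theta_0^2}{r}$ depends only on the range mismatch.

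Next I would turn the sum into an integral. Write $\frac{\pi (Md_0)^2 |\Phi|}{\lambda} = \frac{\pi (Md_0)^2 |\Phi|}{\lambda}$ and substitute $\lambda = 2d_0$ (half-wavelength spacing) to get phase increment $\frac{\pi}{2} q^2 M^2 d_0 |\Phi|$ per index. Setting $t = \frac{2q}{QM\sqrt{d_0|\Phi|}}\cdot\gamma / \gamma$ — more precisely, rescaling the summation index $q\in\{0,\pm1,\dots,\pm\frac{Q-1}{2}\}$ by $x_q = q\sqrt{d_0|\Phi|}\,M / \text{(normalizer)}$ so that $q$ ranging over its $Q$ values corresponds to $x$ covering an interval of half-length $\gamma = \frac{QM}{2}\sqrt{d_0|\Phi|}$ — the sum $\frac{1}{Q}\sum_q e^{\jmath \frac{\pi}{2} q^2 M^2 d_0 \Phi}$ becomes, up to the sign of $\Phi$ and in the large-$Q$ limit, $\frac{1}{2\gamma}\int_{-\gamma}^{\gamma} e^{\pm\jmath \frac{\pi}{2} t^2}\,dt = \frac{1}{\gamma}\int_0^{\gamma} \cos(\tfrac{\pi}{2}t^2)\,dt \pm \frac{\jmath}{\gamma}\int_0^{\gamma}\sin(\tfrac{\pi}{2}t^2)\,dt$, which is exactly $\frac{C(\gamma)+\jmath S(\gamma)}{\gamma} = F(\gamma)$ by the standard definition of the Fresnel integrals $C$ and $S$. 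Taking absolute values and noting $|F(\gamma)|$ is insensitive to the sign of $\Phi$ (since $C$ is even and $|C+\jmath S| = |C-\jmath S|$) gives the claimed $\hat f(r,\theta_0;\bar{\mathbf v},M)\approx|F(\gamma)|$.

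The main obstacle — and the only place a genuine approximation enters, beyond the Fresnel approximation $(a_1)$ already granted in \eqref{LSAsum} — is the Riemann-sum-to-integral step: one must check that the quadratic-phase sum is well approximated by the corresponding integral uniformly over the range of $\gamma$ of interest, i.e., that $Q$ is large enough and the per-step phase increment small enough near the relevant $q$ that the error terms are negligible. This is the standard passage underlying the appearance of Fresnel integrals in near-field array analysis; since the statement only claims an approximation ("$\approx$") and explicitly cites \cite{zhou2024sparse}, I would invoke that reference for the quantitative error control rather than reproduce it, and simply note that the approximation is accurate in the XL-array regime where $Q$ is large. The remaining algebra — tracking the constant $\gamma = \frac{QM}{2}\sqrt{d_0|\Phi|}$ through the change of variables and confirming it matches the normalization in the definition of $F$ — is routine bookkeeping.
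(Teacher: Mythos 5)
Your derivation is correct and is essentially the intended argument: at $\theta=\theta_0$ the linear term $B_1$ vanishes (since $\Delta_{\rm NF}=0$), and the remaining quadratic-phase sum $\frac{1}{Q}\bigl|\sum_{q}\exp\bigl(\jmath\frac{\pi}{2}q^2M^2d_0\Phi\bigr)\bigr|$ becomes, under the substitution $t=qM\sqrt{d_0|\Phi|}$ (step $M\sqrt{d_0|\Phi|}$, half-length $\gamma=\frac{QM}{2}\sqrt{d_0|\Phi|}$), the Fresnel-integral expression $\frac{1}{\gamma}\left|C(\gamma)+\jmath S(\gamma)\right|=\left|F(\gamma)\right|$, with the sign of $\Phi$ immaterial after taking the modulus. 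The paper provides no in-text proof for this lemma and defers to \cite{zhou2024sparse}, whose derivation is exactly this Riemann-sum-to-Fresnel-integral passage, so apart from tidying your change-of-variables bookkeeping your proposal matches the paper's (cited) approach.
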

To ensure the power loss no larger than $ 3 $ dB, we set $ F(\gamma) \ge \frac{1}{2} $, which is equivalent to  \cite{liu2023near} $$ \gamma = \frac{Q M}{2}\sqrt{{d_0}\left| \frac{1-\theta_0^2}{r_0} - \frac{1-\theta_0^2}{r} \right|} \le  \varphi _{3\rm dB} \triangleq 1.6. $$ Then, we have
\begin{equation}
	\left| \frac{1}{r_0} - \frac{1}{r} \right| \le \frac{2 \lambda \varphi_{3\rm dB}^2}{M^2Q^2 d_0^2(1-\theta_0^2)} \triangleq \frac{1}{Z_{\rm LSA}(1-\theta_0^2)},
\end{equation}
which leads to the sampling grids $ r_v = \frac{1}{v}Z_{\rm LSA}(1-\theta_s^2) , \forall v \in \mathcal{V} \triangleq \{1,2,\cdots, V\} $, where $ V $ denotes the number of range samples\cite{Cui2022channel}.

\subsubsection{Codebook design}
According to the above angle and range sampling methods as well as the multi-beam pattern in Proposition \ref{theorem:multi-beam}, we design the array-sparse-activation based near-field multi-beam codebook as follows. Specifically, we only consider the polar-domain subspace within the angular interval $ [-\frac{1}{M},\frac{1}{M}) $. The multi-beam codebook, denoted by $ {\bar{\mathcal{V}}}_{\rm{Pol}} $, consists of $ QV $ codewords, each steering its main-lob towards the location ($ r_{g, v}, \theta_g $), where $ g \in \mathcal{G}\triangleq\{\frac{Q(M-1)+1}{2},\frac{Q(M-1)+1}{2}+1,\cdots, \frac{Q(M+1)-1}{2}\}  $ and $v \in \mathcal{V}$. Mathematically, we have
\begin{framed}
	{\setlength\abovedisplayskip{8pt}
		\setlength\belowdisplayskip{0pt}
		{\bf Near-field multi-beam codebook:}
		\begin{equation} 
		\label{Eq:LSApolarDomainCodebook}
		{\bar{\mathcal{V}}}_{\rm{Pol}}\!\! =\!\! \{{\bar{\mathcal{V}}}_{\frac{Q(M-1)+1}{2}},\!\cdots\!,{\bar{\mathcal{V}}}_g,\cdots,{\bar{\mathcal{V}}}_{\frac{Q(M+1)-1}{2}} \},
		\end{equation} 
}\end{framed}\noindent where ${\bar{\mathcal{V}}}_g\triangleq\{{\bar{\mathbf{v}}}_{g,1}, \cdots,{\bar{\mathbf{v}}}_{g,v},\cdots {\bar{\mathbf{v}}}_{g,V}\}$.
Moreover, we have $ {\bar{\mathbf{v}}}_{g,v} =  \mathbf{b}_{\rm SLA}\left(r_{g, v}, \theta_g\right)$ with  $ r_{g, v} = \frac{1}{v}Z_{\rm SLA}(1-\theta_g^2) $.
Note that by using the array-sparse-activation method, the other subspaces can be covered by the grating-lobes of the designed multi-beam codebook. Let $ {\mathcal{W}}_{\rm{Pol}} $ denote the single-beam codebook in the angular domain consisting of $ QMV $ codewords, which is given by 
\begin{equation}
	\label{Eq:ULApolarDomainCodebook}
	{\mathcal{W}}_{\rm{Pol}} = \{{\mathcal{W}}_1,\cdots,{\mathcal{W}}_s,\cdots,{\mathcal{W}}_{QM} \},
\end{equation}
\noindent where ${\mathcal{W}}_s\triangleq\{{\mathbf{w}}_{s,1}, \cdots,{\mathbf{w}}_{s,v},\cdots {\mathbf{w}}_{s,V}\}$ and $ {\mathbf{w}}_{s,v} =  \mathbf{b}\left(r_{s, v},\theta_s\right)$ with $ r_{s, v} = \frac{1}{v}Z_{\rm SLA}(1-\theta_s^2) $.
Then we have the following key result.

\begin{proposition}
	\label{proposition:multiple codeword}
	{\rm For the proposed near-field multi-beam codebook in \eqref{Eq:LSApolarDomainCodebook},  the beam coverage of each codeword $ {\bar{\mathbf{v}}}_{s,v} $ is the same as that of $ M $ codewords $ {{\mathbf{w}}}_{s+mQ,v}, \forall m \in \mathcal{M} $ in the single-beam codebook \eqref{Eq:ULApolarDomainCodebook}.	 
} 
\end{proposition}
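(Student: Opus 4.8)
The plan is to show that the sparse-array codeword $\bar{\mathbf{v}}_{s,v} = \mathbf{b}_{\rm SLA}(r_{s,v},\theta_s)$ and each of the $M$ dense-array codewords $\mathbf{w}_{s+mQ,v} = \mathbf{b}(r_{s+mQ,v},\theta_{s+mQ})$ have matching main-lobe locations, beam-widths, and beam-depths, so that their beam coverages (as sets of user locations where the normalized beam gain exceeds the relevant thresholds in Definitions~\ref{Def:width}--\ref{Def:Depth}) coincide. First I would invoke Proposition~\ref{theorem:multi-beam}: the sparse codeword $\bar{\mathbf{v}}_{s,v}$ steered at $(\theta_s, r_{s,v})$ generates $|\mathcal{M}|$ sub-beams focused at $\theta_m = \theta_s + \tfrac{2m}{M}$ with $r_m = r_{s,v}\tfrac{1-\theta_m^2}{1-\theta_s^2}$, each of beam-width $\tfrac{4}{QM}$ and with the beam-depth formula given in \eqref{eq:beam-depth}. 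So the claim reduces to verifying that the dense-array codeword $\mathbf{w}_{s+mQ,v}$ has exactly these parameters for the $m$-th sub-beam.

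The key computational step is to match the sampled angles. From the angle-sampling rule \eqref{eq:angleSampling}, $\theta_{s+mQ} = \tfrac{2(s+mQ) - QM - 1}{QM} = \theta_s + \tfrac{2mQ}{QM} = \theta_s + \tfrac{2m}{M}$, which is precisely $\theta_m$. Next I would match the ranges: the dense codebook uses $r_{s+mQ,v} = \tfrac{1}{v}Z_{\rm SLA}(1-\theta_{s+mQ}^2) = \tfrac{1}{v}Z_{\rm SLA}(1-\theta_m^2)$, while the $m$-th grating-lobe of $\bar{\mathbf{v}}_{s,v}$ sits at $r_m = r_{s,v}\tfrac{1-\theta_m^2}{1-\theta_s^2} = \tfrac{1}{v}Z_{\rm SLA}(1-\theta_s^2)\cdot\tfrac{1-\theta_m^2}{1-\theta_s^2} = \tfrac{1}{v}Z_{\rm SLA}(1-\theta_m^2)$, so the two ranges agree as well. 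Then I would check the beam-width: the dense-array codeword $\mathbf{w}_{s+mQ,v}$ is a length-$N$ ULA codeword with $N = Q M$ (modulo the off-by-one in \eqref{eq:AngleBig}), so its null-to-null beam-width on the user-ring is $\tfrac{4}{N} = \tfrac{4}{QM}$ by the standard single-beam analysis (e.g., Lemma~\ref{lemma:LSA_AngularDomain} with $M=1$), matching \eqref{eq:beam-width}. Finally, for the beam-depth I would note that $r_{\mathrm{BD},m}$ in \eqref{eq:beam-depth} depends only on $(QM)^2 d_0(1-\theta_m^2)$ and $\varphi_{3\rm dB}$ — quantities that are identical for the sparse codeword's $m$-th sub-beam and for the dense codeword $\mathbf{w}_{s+mQ,v}$ since both have the same effective aperture $(N-1)d_0 = (Q-1)Md_0$ and the same beam angle $\theta_m$ — so the beam-depths coincide term by term in the piecewise formula.

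I expect the main obstacle to be bookkeeping rather than a conceptual gap: one must be careful that (i) the index $s+mQ$ actually lands in the valid range $\{1,\dots,QM\}$ exactly when $m \in \mathcal{M}$, i.e., exactly when $-1 \le \theta_s + \tfrac{2m}{M} \le 1$, which is why $\mathcal{M}$ is defined with that intersection; (ii) the range-sampling constant $Z_{\rm SLA}$ (equivalently $Z_{\rm LSA}$) used in both codebooks carries the same $(1-\theta^2)$ dependence so the grids align; and (iii) the approximations $N \approx QM$ and the Fresnel approximation $(a_1)$ are used consistently, since the beam-coverage comparison is exact only up to these. I would therefore structure the proof as: (1) cite Proposition~\ref{theorem:multi-beam} for the sparse codeword's multi-beam structure; (2) show $\theta_{s+mQ} = \theta_m$ and $r_{s+mQ,v} = r_m$ via the sampling formulas; (3) show equality of beam-widths and beam-depths from the shared aperture and beam angle; (4) conclude that since beam coverage in Definitions~\ref{Def:width}--\ref{Def:Depth} is determined entirely by (center location, beam-width, beam-depth), the $m$-th sub-beam of $\bar{\mathbf{v}}_{s,v}$ covers the same region as $\mathbf{w}_{s+mQ,v}$, and ranging over $m \in \mathcal{M}$ gives the full claim.
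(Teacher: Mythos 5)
Your proposal follows essentially the same route as the paper's proof: the core step in both is to invoke Proposition~\ref{theorem:multi-beam} and then verify via the sampling formulas that $\theta_m=\theta_s+\frac{2m}{M}=\theta_{s+mQ}$ and $r_m=\frac{1}{v}Z_{\rm SLA}(1-\theta_{s+mQ}^2)=r_{s+mQ,v}$, so the grating-lobe locations of $\bar{\mathbf{v}}_{s,v}$ land exactly on the single-beam codebook grid. Your additional checks on beam-width and beam-depth (and the off-by-one caveat $QM>N$) are details the paper leaves implicit, but they do not change the argument.
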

\begin{proof}
	For each codeword $ {\bar{\mathbf{v}}}_{s,v} = \mathbf{b}_{\rm SLA}\left(r_{s, v}, \theta_s\right) $ that steers multiple beams towards $ \theta_{m} = \theta_s + \frac{2m}{M},
	r_{m} = r_{s,v}\frac{1- \theta_m^2}{1-\theta_s^2},
	\forall m \in \mathcal{M} $, we have$
 \theta_{m} = \theta_s + mQ\times\frac{2}{QM} = \theta_{s+mQ}$,
	$r_{m} = \frac{1}{v}Z_{\rm SLA}(1-\theta_s^2)\frac{1- \theta_{s+mQ}^2}{1-\theta_s^2} = \frac{1}{v}Z_{\rm SLA}(1-\theta_{s+mQ}^2) = r_{s+mQ},
	\forall m \in \mathcal{M} $.
	Therefore, the beam coverage of codeword $ {\bar{\mathbf{v}}}_{s,v} $ is the same as that of  $ {{\mathbf{w}}}_{s+mQ,v} = \mathbf{b}\left(r_{s+mQ, v}, \theta_{s+mQ}\right) $,  thus completing the proof.
\end{proof}
\section{Proposed Multi-beam Training Method}
\label{Sec:scheme1}
In this section, we propose an efficient near-field multi-beam training method based on the designed multi-beam codebook.
For ease of exposition, we focus on the beam training for a typical user and thus omit the user index in the sequel.

The proposed method consists of two phases, namely, the \emph{multi-beam based candidate user locations estimation} and the \emph{single-beam based user location identification}.
The key idea is to first determine $ M $ candidate user locations by using the designed  multi-beam codebook and then resolve the user location ambiguity by sequentially testing $ M $ single-beam codewords. The detailed procedures are presented below.
\subsubsection{Multi-beam based candidate user locations estimation}
\label{sec:Multi-beam sweeping}
In this phase, the XL-array BS utilizes the multi-beam codebook $ {\bar{\mathcal{V}}}_{\rm{Pol}} $ in \eqref{Eq:LSApolarDomainCodebook} to scan the whole space and identify $ M $ candidate locations based on the received signals. 
Specifically, the BS sequentially sends $ QV $ training symbols by employing the multi-beam codebook $ {\bar{\mathcal{V}}}_{\rm{Pol}} $, as illustrated in Fig. \ref{fig:trainingMethod}(a). 
For each codeword $ {\bar{\mathbf{v}}}_{g,v} $, the received signal at the user is given by
\begin{align}\label{Eq:received signal of each codeword}
	y_{}({\bar{\mathbf{v}}}_{g,v}) =
	\sqrt{N}\beta\mathbf{b}^{H}(r_{0},\theta_{0}){\bar{\mathbf{v}}}_{g,v}x+z.
\end{align}
Next, the user estimates the best multi-beam codeword based on the received powers, whose index can be mathematically expressed as 
\begin{equation}
	\left(\check{g}, \check{v}\right)=\arg \max _{g \in \mathcal{G}, v \in \mathcal{V}}\left|y_{}({\bar{\mathbf{v}}}_{g,v})\right|^2 .
\end{equation}
Then, the user feeds back the index $ \{\check{g},\check{v}\} $ to the BS via an error-free control link.
After receiving the feedback, the BS determines $ M  $ candidate user locations, given by
\begin{equation}
	\theta_{m} = \theta_{\check{g}} + \frac{2m}{M},~~~ r_{m} = r_{{\check{g}},{\check{v}}}\frac{1- \theta_{m}^2}{1-\theta_{\bar g}^2}, \forall m \in \mathcal{M}.
\end{equation}
\begin{figure}[t]
	\includegraphics[width=9cm]{./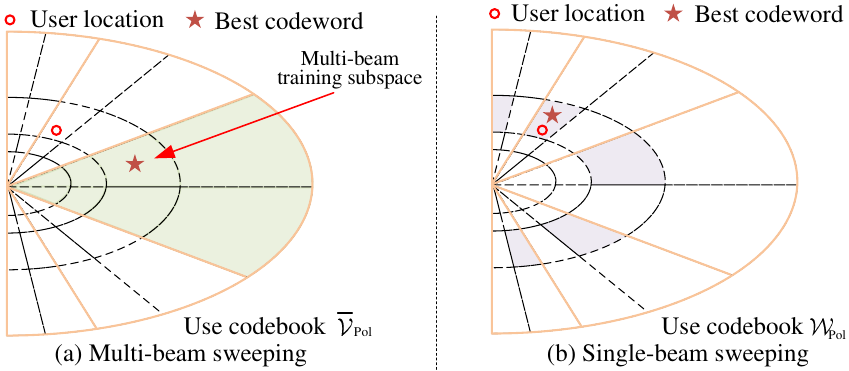}
	\centering
	\caption{Illustration of the proposed multi-beam training method based on array sparse activation.}
	\label{fig:trainingMethod}
\end{figure}
\subsubsection{Single-beam based user location identification}
\label{sec:Single-beam sweeping}
In the second phase, the BS further resolves the user location ambiguity by employing the single-beam codebook for beam sweeping.
Specifically, the BS sequentially sends $ M $ training symbols by using the beam codewords $ \{ {\mathbf{w}}_{\check{g}+mQ,\check{v}},\forall m \in \mathcal{M} \}  $ as illustrated in Fig. \ref{fig:trainingMethod}(b).
For each codeword $ {\mathbf{w}}_{\check{g}+mQ,\check{v}} $, the received signal at the user is given by
\begin{align}\label{Eq:received signal single beam}
	y_{}({\mathbf{w}}_{\check{g}+mQ,\check{v}}) \!=\!
	\sqrt{N}\beta_{}\mathbf{b}^{H}(r_{0},\theta_{0}){\mathbf{w}}_{\check{g}+mQ,\check{v}}x+z.
\end{align}
Then, based on the received powers, the user identifies the best single-beam polar-domain codeword, whose index is given by
\begin{equation}
	(g^\ast,{v}^\ast) =\arg \max _{m \in \mathcal{M}}\left|y_{}({\mathbf{w}}_{\check{g}+mQ,\check{v}})\right|^2 .
\end{equation}
Accordingly, the user location is estimated as $ (r^\ast,\theta^\ast) $ based on the polar-domain codebook $ {\mathcal{W}}_{\rm{Pol}} $.

{\underline{\textbf{Beam training overhead:}}}
The beam training overhead of the proposed method is analyzed as follows. For a typical user, the first phase requires $ QV $ training symbols for the multi-beam sweeping in the sector subspace (see Section \ref{sec:Multi-beam sweeping}) and the second phase requires $ M $ training symbols for resolving the angular ambiguity (see Section \ref{sec:Single-beam sweeping}). Therefore, for $ K $ users, in the worst case, the total number of  beam training symbols is obtained as 
\begin{equation}
	\label{Eq:multi-beam training overhead}
	T^{(\rm MU)} = QV + KM,
\end{equation}
where $ Q = \frac{N-1}{M}$.
One can observe from \eqref{Eq:multi-beam training overhead} that as $ M $ increases, the beam training overhead of the first phase decreases, while that of the second phase increases due to more candidate user angles.
Thus, there generally exists a fundamental trade-off between the reduction of the training overhead in the first and second phases with respect to the value of $ M $.
As such, it is necessary to find an optimal $ M $ to minimize the total number of training symbols required for the proposed multi-beam training method, which can be mathematically formulated as follows.
\begin{subequations}
	\begin{align}
		({\bf P3}):~~~& \min_M~ F(M) = \frac{(N-1)V}{M} + KM \nn\\
		&~~{\text{s}}{\text{.t}}{\rm{. }} ~ M \le M_{\rm th},\label{Eq:desired beam pattern condition}\\
		&~~~~~~~ M \in \mathcal{F} \triangleq \left\{\frac{N-1}{M}+1\in \mathcal{Z}\right\} \label{Eq:interger constraint},
	\end{align}
\end{subequations}where $ M_{\rm th} $ is given in Proposition~\ref{pro:peak-rings} and constraint \eqref{Eq:desired beam pattern condition} is to enforce the desired multi-beam pattern (see details in Section \ref{sec:extraRings}). Problem ({\bf P3}) is a non-convex combinatorial optimization problem due to the integer constraint in \eqref{Eq:interger constraint}. To address this issue, we first relax the integer constraint for $ M $, thus yielding the following relaxed problem. 
\begin{subequations}
	\begin{align}
	({\bf P4}):~~~& \min_M~ F(M) = \frac{(N-1)V}{M} + KM\\
	&~~{\text{s}}{\text{.t}}{\rm{. }} ~ M \le M_{\rm th}.
	\end{align}
\end{subequations}
Problem ({\bf P4}) now is a convex optimization problem, whose optimal solution is obtained below.
\begin{lemma}
	\emph{The optimal solution to $ ({\bf P4}) $ is given by
			{\small\begin{equation}
				\label{eq:optimal solution of P3}
				M^{\ast} = \min\left\{M_{\rm th}, \sqrt{\frac{(N-1)V}{K}} \right \}. 
			\end{equation}}}
\end{lemma}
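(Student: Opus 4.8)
The plan is to solve the relaxed problem $(\mathbf{P4})$ by the standard calculus-plus-monotonicity argument for functions of the form $F(M) = a/M + bM$ with $a = (N-1)V$ and $b = K$. First I would observe that $F$ is continuously differentiable on $(0, M_{\rm th}]$ with $F'(M) = -a/M^2 + b$ and $F''(M) = 2a/M^3 > 0$, so $F$ is strictly convex on the feasible interval; hence any stationary point in the interior is the unique global minimizer, and if no stationary point lies in the interval the minimum is attained at an endpoint.

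Next I would locate the unconstrained stationary point: setting $F'(M) = 0$ gives $M = \sqrt{a/b} = \sqrt{(N-1)V/K}$, call it $\widehat{M}$. Since $F$ is decreasing on $(0, \widehat{M})$ and increasing on $(\widehat{M}, \infty)$, I would split into two cases. If $\widehat{M} \le M_{\rm th}$, then $\widehat{M}$ lies in the feasible region and is the optimizer, so $M^\ast = \widehat{M}$. If instead $\widehat{M} > M_{\rm th}$, then $F$ is still strictly decreasing throughout $(0, M_{\rm th}]$, so the minimum over the feasible set is attained at the boundary $M = M_{\rm th}$, giving $M^\ast = M_{\rm th}$. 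Combining the two cases yields $M^\ast = \min\{M_{\rm th}, \sqrt{(N-1)V/K}\}$, which is exactly \eqref{eq:optimal solution of P3}.

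I do not anticipate a genuine obstacle here; the only mild subtlety is making sure the constraint $M \le M_{\rm th}$ is the only active constraint in $(\mathbf{P4})$ (the positivity $M > 0$ is implicit from the problem context, since $Q = (N-1)/M$ must be positive), and that taking the minimum correctly captures both the interior-optimum and boundary-optimum regimes. A short remark could also note that the expression $M^\ast$ generally need not satisfy the integer constraint \eqref{Eq:interger constraint} of the original $(\mathbf{P3})$, so in practice one rounds $M^\ast$ to the nearest feasible value in $\mathcal{F}$ and picks whichever of the two neighbors gives the smaller objective; but that is a post-processing step outside the scope of this lemma.
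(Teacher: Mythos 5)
Your argument is correct and coincides with the paper's (largely implicit) reasoning: the paper simply notes that $(\mathbf{P4})$ is convex and states the minimizer, and your calculus-plus-monotonicity case analysis on $F(M)=(N-1)V/M+KM$ with the constraint $M\le M_{\rm th}$ is exactly the standard argument filling in that step. No gaps.
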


When the integer constraint in $ \eqref{Eq:interger constraint} $ is taken into account, a suboptimal solution to $ ({\bf P3}) $ can be obtained as
\begin{equation}
	\hat M = \arg \min\limits_{f \in  \mathcal{F}}\{  |M^{\ast} - f| \}.
\end{equation}

\begin{example}[Beam training overhead comparison] \emph{Consider the case with $ N = 257 $ , $ V = 4 $ and $ K = 8 $.
It can be shown that $ M = 16 $ is the optimal solution to Problems ({\bf P3}) and ({\bf P4}). As such,
the proposed multi-beam training method requires $ T^{(\rm MU)} = 256 $ symbols in total, which is significantly smaller than that of the exhaustive-search method ($ T^{(\rm EX)} = 4224 $) \cite{Cui2022channel} and the two-phase beam training method ($ T^{(\rm TP)} = 720 $)~\cite{Zhang2022fast}.}
\end{example}

\begin{remark}[Estimation error] \label{remark:estimation_error}
	\emph{Consider the case without noise. It can be easily shown that the proposed multi-beam training method can achieve the same beam training accuracy as the exhaustive-search method.
	Next, for the case with noise,
	the accuracy of the proposed multi-beam training method is mainly affected by two factors.
	First, the proposed method is an on-grid user angle-and-range estimation method. 
	A higher sampling resolution usually leads to higher estimation accuracy.
	Second, the beam training accuracy typically depends on the sparse-activation parameter $ M $. As $ M $ increases, the presence of abnormal rings may degrade the beam training accuracy, which will be numerically verified in Section \ref{Sec:numericalResults}. 
}
\end{remark}

\begin{remark}[Far-field extension]
\label{re:Far-field extension-scheme1}
{	\rm
	For the far-field scenario, the proposed multi-beam training method can be directly applied, while its beam training overhead can be further reduced if the polar-domain codebook is replaced by the DFT codebook.
	Specifically, the multi-beam codebook based on array sparse activation can be constructed as $ {\bar{\mathcal{V}}}_{\rm{DFT}}\!\! =\!\! \{{\bar{\mathbf{v}}}_{\frac{Q(M-1)+1}{2}},\!\cdots\!,{\bar{\mathbf{v}}}_g,\cdots,{\bar{\mathbf{v}}}_{\frac{Q(M+1)-1}{2}} \} $, where each codeword $ {\bar{\mathbf{v}}}_g $ steers the main-lobe towards the angle $ \theta_g $, and at the same time, generates $ M-1 $ grating-lobs in the angles $ \theta_g + 2/M, \forall m \in \mathcal{M} $.
	Then, in the first phase, we can employ the DFT-based multi-beam codebook for the angle sweeping in the sector subspace $ [-1/M, 1/M) $, which aims to determine $ M $ candidate user angles.
	In the second phase, the user angular ambiguity can be resolved by performing the single-beam sweeping based on the DFT codebook.
	As such, in the worst case, the total number of training symbols required is given by $ T^{(\rm MU)}_{\rm far} = Q+KM $.
}
\end{remark}
\begin{remark}[Extension to multi-path channels]
	{\rm 
	The proposed near-field multi-beam training method can be extended to the general case with comparable multi-path components. Specifically, in the first phase, each user needs to feed back the indices of multi-beam codewords, for which the received powers are larger than a properly chosen threshold.
	Then, in the second phase, the best angle and range for each path can be estimated by using the single-beam codebook, whose procedures are  similar to that for the LoS path.}
\end{remark}

\section{Numerical Results}
\label{Sec:numericalResults}
In this section, numerical results are presented to demonstrate the effectiveness of our proposed near-field multi-beam training method.
\subsection{System Setup and Benchmark Schemes}
The system parameters are set as follows, unless specified otherwise.
The BS is equipped with $ N = 257 $ antennas and the carrier frequency is $ f = 30 $ GHz, for which the reference channel power gain is  $ \beta_0 = (\frac{4\pi}{\lambda})^2  = -62 $ dB. Moreover, the BS transmit power and noise power are set as $ P_{\rm tol} = 30 $ dBm and $ \sigma^2 = -70 $ dBm, respectively.
The number of NLoS paths and Rician factor are respectively $ L_k = 2$ and $ \kappa_k = 30$ dB, $ \forall k $, which is practical in mmWave frequency bands \cite{liu2023low, shi2023spatial}.
The antenna activation interval is $ M = 16 $.
The reference SNR is defined as $ {\rm SNR} = \frac{NP_{\rm tol} \beta_0}{r_{\rm u}^2 \sigma^2} $~\cite{Zhang2022fast}, where $ r_{\rm u} $ denotes the user range.
In addition, the \emph{effective} sum-rate is defined as $ R_{\mathrm{e}}=\left(1-\frac{T_{\mathrm{p}}}{T_{\mathrm{tol}}}\right) R_{\text {sum }} $, where $ T_{\mathrm{p} }$ and $ T_{\mathrm{tol}} $ denote the time duration of required pilot symbols and the total transmission time, respectively.
In this paper, we set $ T_{\mathrm{tol}} = 0.2 $ ms and the time for transmitting a pilot symbol is 0.1 \si{\micro\second} \cite{liu2024near}.
All numerical results are obtained by averaging 1000 channel realizations.
For performance comparison, we consider the following benchmark schemes:

\begin{itemize}
	\item {\rm \textbf{Perfect-CSI based beamforming:}}
	This scheme assumes that the user channel information is perfectly known at the BS, which serves as the performance upper bound.
	\item {\rm \textbf{LS channel estimation:}}
	For this scheme, each user estimates the channel by using the LS downlink channel estimation method based on orthogonal pilot symbols \cite{coleri2002channel}.
	The overhead of this scheme is $ T^{({\rm LS})} = N $. 
	\item {\rm \textbf{Exhaustive-search beam training:}} 
	This scheme utilizes the single-beam near-field codebook (see \eqref{Eq:ULApolarDomainCodebook}) to conduct an exhaustive search over both the angular and range domains \cite{Cui2022channel}.
	Its beam training overhead is $ T^{\rm (EX)} = QMV $.
	\item {\rm \textbf{Two-phase beam training:}}
	This scheme first utilizes the DFT codebook to obtain the user angle information  and then estimates the user range by employing the polar domain codebook.
	To ensure superior performance, we assume that $ \chi = 3 $ candidate angles are selected in the first phase \cite{Zhang2022fast}, for which the total beam training overhead is $ T^{\rm (2P)} = QM + 3KV $.
	\item {\rm \textbf{Multi-beam training based on sub-array:}}
	This far-field based beam training scheme leverages the array division method to simultaneously generate multiple beams.
	However, this scheme suffers from significant performance loss when the number of sub-array  $ \tilde M $ is large \cite{you2022fast}.
	Therefore, to ensure satisfactory performance, we set $ \tilde M  = 4 $ in our simulation, which requires $ T^{\rm (M-Far)} = \frac{N}{\tilde M}\left(1+\frac{\log _2 \tilde M}{2}\right) $ pilot symbols \cite{you2022fast}.
	\item  {\rm \textbf{Far-field beam training based on DFT codebook:}}
	This scheme utilizes the DFT codebook to select the best codeword for which the maximum received signal power is achieved at the user.
	The beam training overhead of this scheme is $ T^{\rm (Far)} = QM $.
\end{itemize}
\subsection{Performance Comparison}
In this section, we compare our proposed beam training method with benchmark schemes under various setups.
\subsubsection{Effect of reference SNR}
In Fig. \ref{fig:RateSNR}, we plot the achievable rate versus the reference SNR, where the (spatial) angle and range of the typical user is randomly distributed in the region $ [-1,1] $ and $ [10,20] $ m.
Several observations are summarized as follows.
First, it is observed that the proposed multi-beam training method achieves very close rate performance to that of the exhaustive-search method and outperforms other on-grid beam training methods across all SNR regimes (except two-phase beam training method).
This can be explained by the fact that the proposed array-sparse-activation method generates multiple beams without any interference between them. From the perspective of beam pattern, it can achieve the same performance as the exhaustive-search beam training method.
Second, in the low-SNR regime, the proposed multi-beam training method outperforms the two-phase beam training method.
This is because in for the latter scheme the low-SNR regime, the accuracy of angle estimation in the first phase may not be accurate, whose error further propagates to the second phase, thus degrading the overall estimation accuracy.
\begin{figure}
	\centering
	\includegraphics[width=8.2cm]{./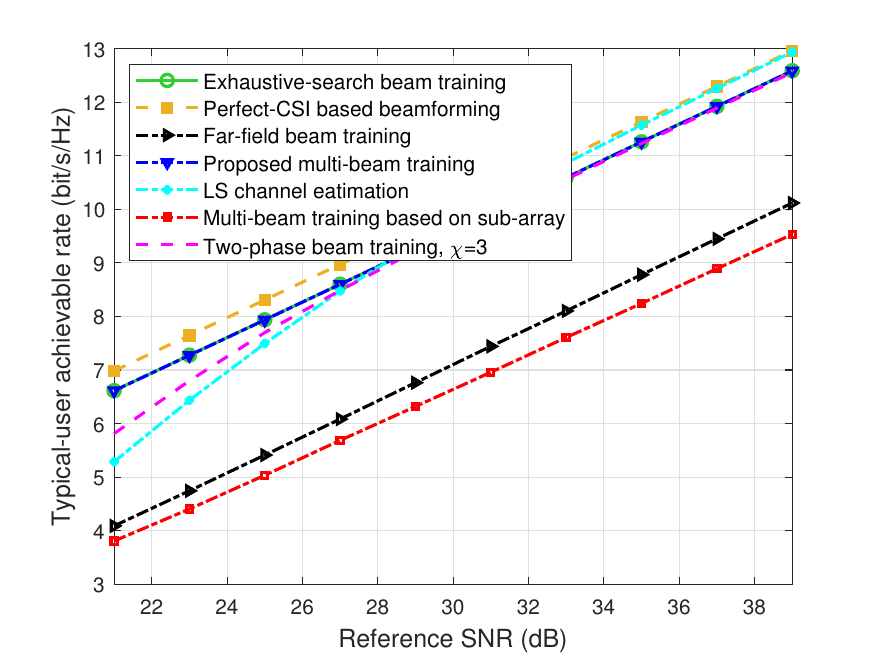}
	\caption{Achievable rate versus reference SNR.}
	\label{fig:RateSNR}
\end{figure}
\begin{figure}
	\centering
	\includegraphics[width=8.2cm]{./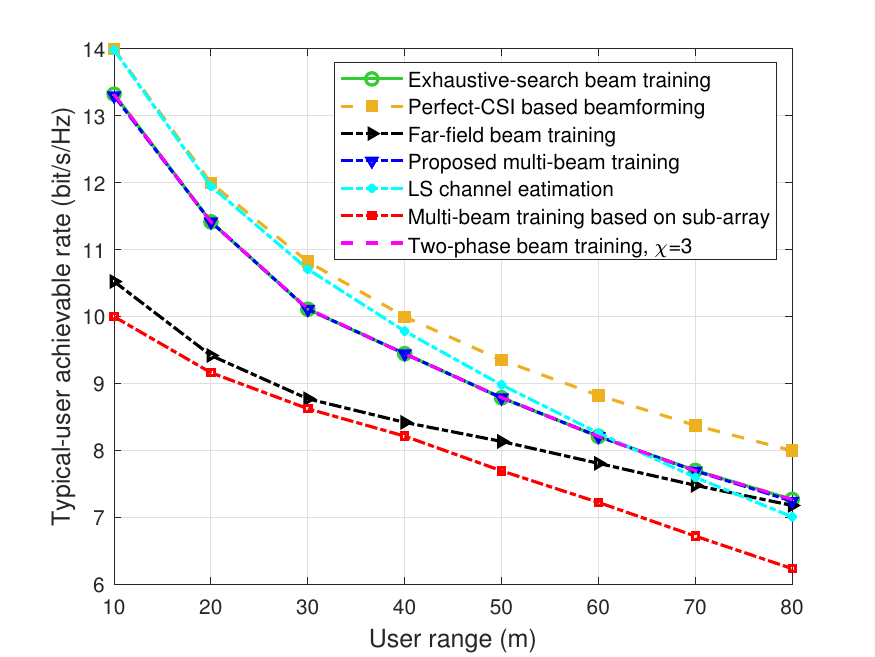}
	\caption{Achievable rate versus user range.}
	\label{fig:RateRange}
\end{figure}
\subsubsection{Effect of user range}
In Fig. \ref{fig:RateRange}, we plot the achievable rate versus the typical user range with $ f = 30 $ GHz, where the spatial angle is randomly distributed in the region $ [-1,1] $ and the transmit power is fixed at $ P_{{\rm tol}} = 30 $ dBm.
It is observed that the proposed multi-beam training method achieves nearly the same performance as the exhaustive-search and two-phase beam training methods.
However,  the proposed method requires much fewer pilot symbols than the other two methods, leading to higher effective achievable rate (to be shown in Fig. \ref{fig:EffectiveRate}).
Moreover, the rate performance of the LS method becomes worse than that of the on-grid near-field beam training methods when the user range increases. This phenomenon arises because the users experience a lower SNR at larger ranges.
\begin{figure}
	\centering
	\includegraphics[width=8.2cm]{./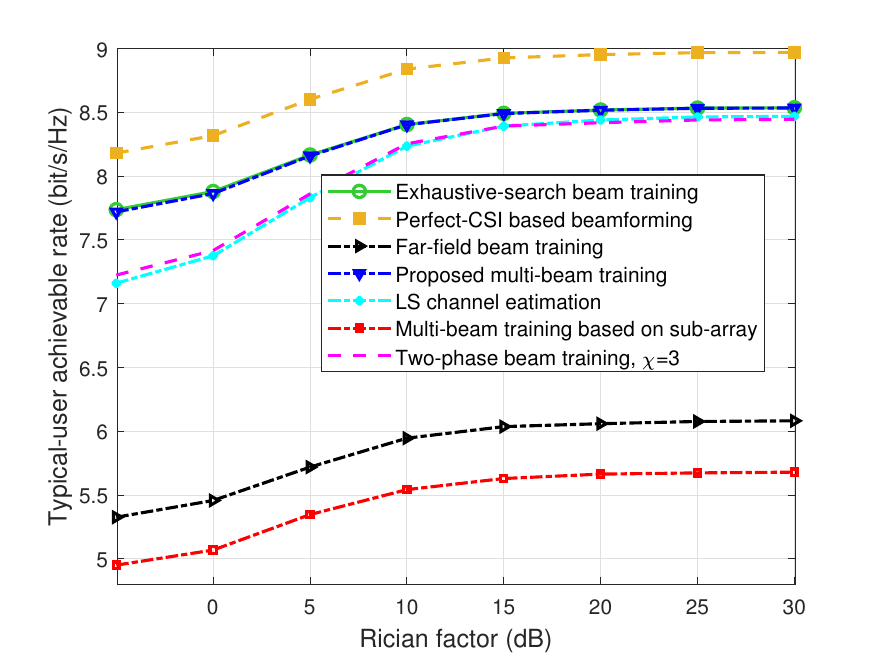}
	\caption{Achievable rate versus Rician factor.}
	\label{fig:RateRician}
\end{figure}
\begin{figure}
	\centering
	\includegraphics[width=8.2cm]{./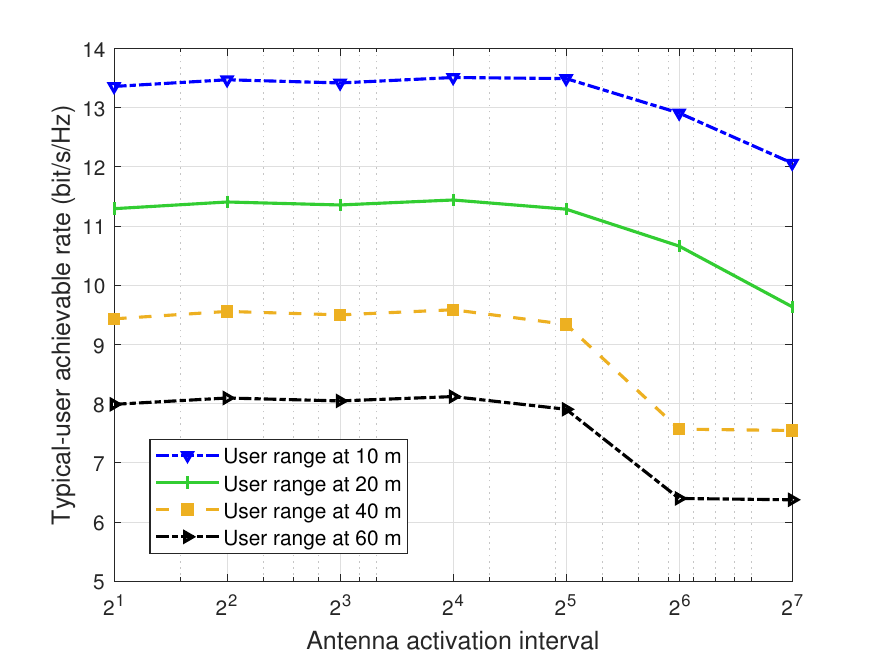}
	\caption{Achievable rate versus antenna activation interval.}
	\label{fig:RateSparsity}
\end{figure}
\subsubsection{Effect of Rician factor}
We plot the achievable rate versus the Rician factor in Fig. \ref{fig:RateRician}, where the reference SNR is fixed at $ 28 $ dB.
The main observations are made as follows.
First, one can observe that the achievable rate of all channel estimation/beam training methods improves with the Rician factor  and it saturates when the Rician factor is sufficiently large.
Moreover, it is observed that the proposed multi-beam training method outperforms the two-phase beam training method in the low-Rician-factor regime.
This can be explained by  two reasons.
On  one hand, the two-phase beam training method imposes stricter requirements on the LoS-dominant channel and high SNR to ensure the accurate user angle estimation in the first stage.
When the Rician factor is small, the NLoS path component degrades the accuracy of angle estimation, whose error further propagates to the second phase, thus compromising the overall beam training performance.
In contrast, the proposed multi-beam training is more robust to the noise, even at the low-SNR regime.
\begin{figure}[t]
	\centering
	\includegraphics[width=8.2cm]{./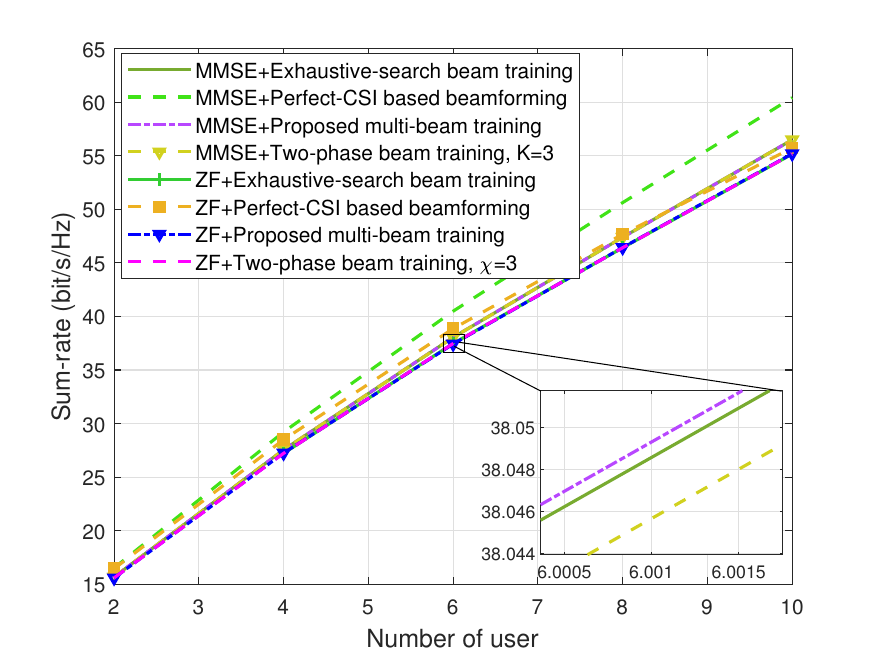}
	\caption{Sum rate versus the number of users.}
	\label{fig:SumRate}
\end{figure}
\begin{figure}[t]
	\centering
	\includegraphics[width=8.2cm]{./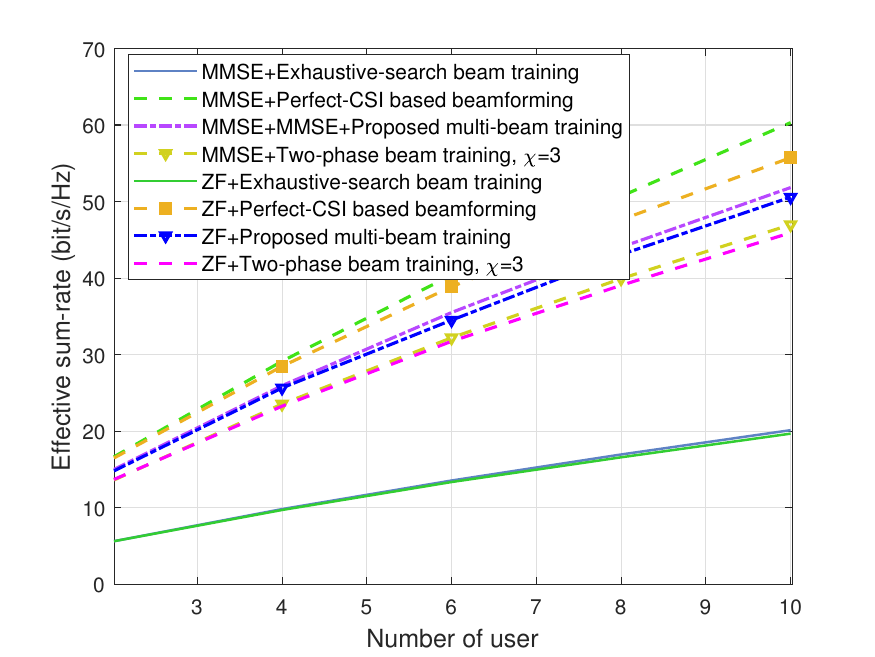}
	\caption{Effective rate versus the number of users.}
	\label{fig:EffectiveRate}
\end{figure}
\subsubsection{Effect of antenna activation interval}
In Fig. \ref{fig:RateSparsity}, we plot the achievable rate versus the antenna activation interval to illustrate the effects of abnormal rings.
It is observed that when the activation interval $ M $ is smaller than $ 32 $ (the optimal solution to $ ({\bf P4}) $), the performance of proposed multi-beam training method remains largely consistent across all user ranges.
However, as $ M $ continues to increase, significant performance degradation occurs.
This is because the conditions for the appearance of abnormal rings  are about $ M > 16  $.
In Fig. \ref{fig:RateSparsity}, when $ M = 64 $ and $ M = 128 $, the Type-III, II, and I rings sequentially appear, resulting in significant performance loss due to the user location ambiguity.
\subsubsection{Effect of number of users}
In Figs. \ref{fig:SumRate} and \ref{fig:EffectiveRate}, we plot the sum-rate and effective sum-rate versus the number of users, respectively.
From Fig. \ref{fig:SumRate},  it is observed that regardless of the number of users and whether the digital beamforming is based on ZF or MMSE, the proposed multi-beam training method always achieves rate performance approaching that of the exhaustive-search method. 
\subsubsection{Effect of frequency band}
\begin{figure}
	\centering
	\includegraphics[width=8.2cm]{./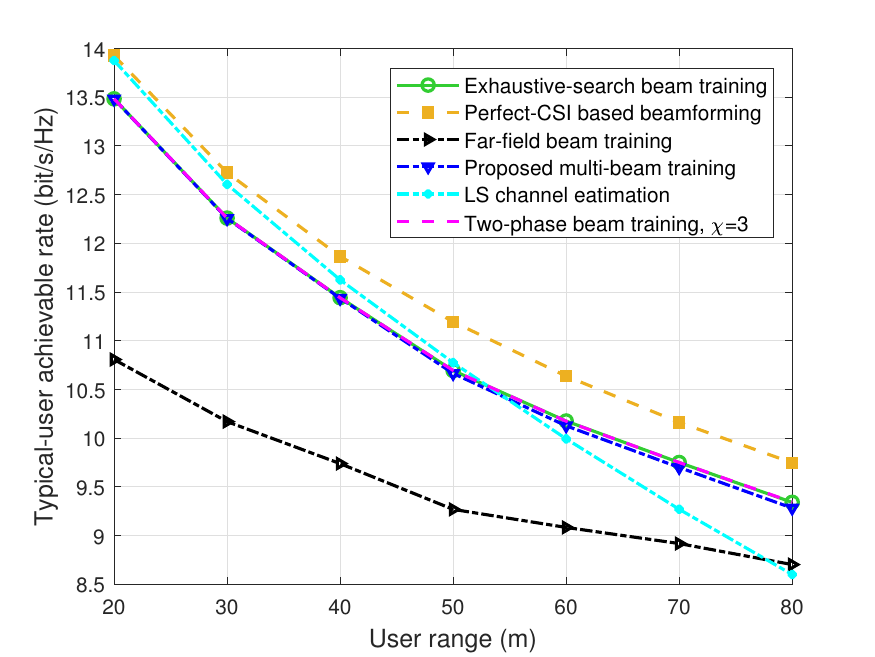}
	\caption{Achievable rate versus user range with $ f = 300 $ GHz.}
	\label{fig:RateRange300GHz}
\end{figure}
\begin{figure}
	\centering
	\includegraphics[width=8.2cm]{./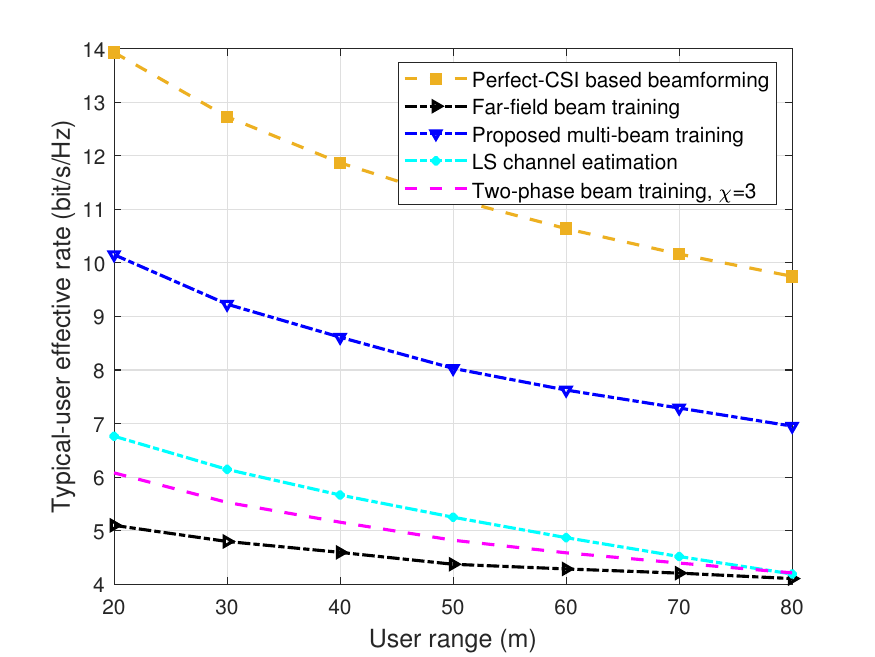}
	\caption{Achievable rate versus user range with $ f = 300 $ GHz.}
	\label{fig:EffRateRange300GHz}
\end{figure}
Finally, we  demonstrate the performance of the proposed multi-beam training method in THz bands, for which the LoS channel model is modified as
\begin{equation}\label{Eq:THz nf-channel}
	\mathbf{h}^H_{k} = \sqrt{N}\beta_{k} \mathrm{e}^{-\frac{1}{2} A(f) r_k} \mathbf{b}^{H}(r_{k}, \theta_{k}),
\end{equation}where $ A(f)$ denotes absorption loss caused by molecules in the propagation medium such as oxygen and carbon dioxide and we set $ A(f) = 5.157 × 10^{-4} $ dB/m \cite{su2023wideband,han2014multi}.
To compensate for severe path loss, the total power and number of antenna are set as $ P_{\rm tol} = 48 $ dBm and $ N = 1025 $, respectively.
We plot in Fig. \ref{fig:RateRange300GHz} the achievable rate versus the user range where $ f = 300 $ GHz.
It is observed that our proposed multi-beam training method still performs very well in the THz band, which achieves similar rate performance as that of the exhaustive-search method.
In addition, as compared to Fig. \ref{fig:RateRange300GHz}, one can observe significant performance gap between the LS channel estimation and other near-field beam training methods. 
This is because the LS channel estimation method is more sensitive to noise, and under the THz circumstance, the received SNR of the user decreases rapidly with the user range due to the severe path loss.
Moreover, we plot the effective rate versus the user range in Fig. \ref{fig:EffRateRange300GHz}.
It is observed that the proposed multi-beam training method achieves significantly better performance as compared with other benchmark schemes due to lower pilot overhead.
\section{Conclusions}
\label{Sec:Conclusion}
In this paper, we proposed a fast near-field multi-beam training method by utilizing a new array-sparse-activation method.
First, we characterized the beam pattern of the proposed array-sparse-activation method and presented sufficient conditions for generating desired multi-beam pattern.
Then, we designed a near-field multi-beam codebook to identifying candidate locations for the user and a single-beam codebook to determine each user location, based on which a new multi-beam training method was proposed to achieve efficient beam training with low beam training overhead.
Finally, numerical results were presented to demonstrate the effectiveness of the proposed multi-beam training method as compared with various benchmark schemes.

\bibliographystyle{IEEEtran}
\bibliography{IEEEabrv}

\end{document}